\newcommand{\tlog}{{\rm logt}}
\newcommand{\costasnote}[1]{{#1}}
\newcommand{\jdnote}[1]{{#1}}
\newtheorem{Theorem}{Theorem}
\newtheorem{Theorem*}{Theorem}
\newtheorem{Claim}[Theorem]{Claim}
\newtheorem{Claim*}[Theorem]{Claim}
\newtheorem{Corollary}[Theorem]{Corollary}
\newtheorem{CounterExample*}{$\overline{\hbox{\bf Example}}$}
\newtheorem{Definition}[Theorem]{Definition}
\newtheorem{Example*}[Theorem]{Example}
\newtheorem{Intuition*}[Theorem]{Intuition}
\newtheorem{Joke*}[Theorem]{Joke}
\newtheorem{Lemma}[Theorem]{Lemma}
\newtheorem{Lemma*}[Theorem]{Lemma}
\newtheorem{Open problem}[Theorem]{Open problem}
\newtheorem{Question*}[Theorem]{Question}
\def \bSubexa    {\begin{subexa}}
\newcommand{\ignore}[1]{}
\newcommand{\EE}{\mathbb{E}}
\def \cI     {{\cal I}}
\def \cP     {{\cal P}}
\def \cQ     {{\cal Q}}
\newcommand{\Var}{{\rm Var}}
\def \upto  {{,}\ldots{,}}
\newcommand{\ed}{\stackrel{\mathrm{def}}{=}}
\def\ignore#1{}
\newcommand{\bi}{\begin{itemize}}
\newcommand{\ei}{\end{itemize}}
\def\orpro{\mathop{\mathchoice
   {\vee\kern-.49em\raise.7ex\hbox{$\cdot$}\kern.4em}
   {\vee\kern-.45em\raise.63ex\hbox{$\cdot$}\kern.2em}
   {\vee\kern-.4em\raise.3ex\hbox{$\cdot$}\kern.1em}
   {\vee\kern-.35em\raise2.2ex\hbox{$\cdot$}\kern.1em}}\limits}
\def\andpro{\mathop{\mathchoice
 {\wedge\kern-.46em\lower.69ex\hbox{$\cdot$}\kern.3em}
 {\wedge\kern-.46em\lower.58ex\hbox{$\cdot$}\kern.25em}
 {\wedge\kern-.38em\lower.5ex\hbox{$\cdot$}\kern.1em}
 {\wedge\kern-.3em\lower.5ex\hbox{$\cdot$}\kern.1em}}\limits}
\def\simge{\mathrel{%
   \rlap{\raise 0.511ex \hbox{$>$}}{\lower 0.511ex \hbox{$\sim$}}}}
\def\simle{\mathrel{
   \rlap{\raise 0.511ex \hbox{$<$}}{\lower 0.511ex \hbox{$\sim$}}}}
\newcommand{\Pbd}{PBD}
\newcommand{\pbd}[1]{{\cal PBD}_{#1}}
\newcommand{\pbdn}{\pbd \absz}
\newcommand{\pbdep}{\pbd{\absz,\epsilon}}
\newcommand{\pbdabsz}{\pbd{\absz}}
\newcommand{\ppbd}{\dP_{\rm pbd}}
\newcommand{\Binomial}[2]{Binomial({#1},{#2})}
\newcommand{\absz}{n}
\newcommand{\abszp}{\absz'}
\newcommand{\nsmp}{k}
\newcommand{\Nsmp}{K}
\newcommand{\Nsmpi}[1]{\Nsmp_{#1}}
\newcommand{\dst}{\epsilon}
\newcommand{\epsp}{\epsilon'}
\newcommand{\err}{\delta}
\newcommand{\dtv}[2]{d_{TV}(#1,#2)}
\newcommand{\ellone}[2]{d_{TV}(#1,#2)}
\newcommand{\elltwosq}[2]{\ell_2^2(#1,#2)}
\newcommand{\ellto}{\ell_2}
\newcommand{\prob}[1]{{\rm Prob}\left(#1\right)}
\newcommand{\Poisson}{{\rm poi}}
\newcommand{\poi}[1]{\Poisson\left(#1\right)}
\newcommand{\poid}[2]{\Poisson(#1,#2)}
\newcommand{\Pbin}{P_{b}}
\newcommand{\dP}{P}
\newcommand{\cPp}{\cP'}
\newcommand{\cQp}{\cQ'}
\newcommand{\Pone}{P_1}
\newcommand{\Ptwo}{P_2}
\newcommand{\tpone}{P_{tp1}}
\newcommand{\tptwo}{P_{tp2}}
\newcommand{\mltone}{\mlt_1}
\newcommand{\sigmaone}{\sigma_1^2}
\newcommand{\mlttwo}{\mlt_2}
\newcommand{\sigmatwo}{\sigma_2^2}
\newcommand{\Ptrue}{Q}
\newcommand{\Pund}{P}
\newcommand{\lmbi}{\lambda_i}
\newcommand{\lmbip}{\lambda_i'}
\newcommand{\lamb}[1]{\lambda_{#1}}
\newcommand{\lambp}[1]{\lambda_{#1}'}
\newcommand{\mlt}{\mu}
\newcommand{\var}{Var}
\newcommand{\std}{\sigma}
\newcommand{\bern}{B}
\newcommand{\low}{L}
\newcommand{\high}{H}
\newcommand{\intlh}{[L,H]}
\newcommand{\insmp}{S}
\newcommand{\intsize}{H-L}
\newcommand{\tO}{\tilde{O}}
\newcommand{\ypbd}{{\bf Yes \Pbd\xspace}}
\newcommand{\npbd}{{\bf No \Pbd}}
\newcommand{\mean}{\mu}
\newcommand{\meanp}{\mu'}
\newcommand{\varp}{\sigma'}
\newcommand{\quantile}[1]{N_{#1}}
\newcommand{\tp}{\hat{P}_{tp}}
\newcommand{\tpund}{P_{tp}}
\newcommand{\tpunder}{P_{tp}({\mlt},{\sigma}^2)}
\newcommand{\tpest}{P_{tp}(\hat{\mlt},\hat{\sigma}^2)}
\newcommand{\hvar}{\hat{\sigma}}
\newcommand{\hmlt}{\hat{\mlt}}
\newcommand{\XoN}{X_1^{\Nsmp}}
\title{Testing Poisson Binomial Distributions}
\author{Jayadev Acharya\thanks{Supported by grant from MITEI-Shell program.}\\
EECS, MIT\\
\tt{jayadev@csail.mit.edu}
\and
Constantinos Daskalakis\thanks{Supported by a Sloan Foundation Fellowship, a Microsoft Research Faculty Fellowship and NSF Award CCF-0953960 (CAREER) and CCF-1101491.}\\
EECS, MIT \\
\tt{costis@mit.edu}
}
\begin{document}
\addtocounter{page}{-1}

\maketitle \thispagestyle{empty}

\begin{abstract}
A Poisson Binomial distribution over $n$ variables is the distribution of the sum of $n$ independent Bernoullis. We provide a sample near-optimal algorithm for testing whether a distribution $\Pund$ supported on $\{0,\ldots,n\}$ to which we have sample access is a Poisson Binomial distribution, or far from all Poisson Binomial distributions. The sample complexity of our algorithm is $O(n^{1/4})$ to which we provide a matching lower bound. We note that our sample complexity improves quadratically upon that of the naive ``learn followed by tolerant-test'' approach, while instance optimal identity testing~\cite{ValiantV14} is not applicable since we are looking to simultaneously test against a whole family of distributions.
\end{abstract}

\newpage

\section{Introduction} \label{sec:intro}

Given independent samples from an unknown probability distribution $\Pund$ over $\{0,\ldots,n\}$, can you explain $\Pund$ as the distribution of the sum of $n$ independent Bernoullis? For example, $\Pund$ may be the number of faculty attending the weekly faculty meeting, and you may be looking to test whether your observations are consistent with the different faculty decisions being independent. It is a problem of testing against a {\em family} of distributions: 

\medskip \noindent {\sc PbdTesting:} Given $\dst>0$ and sample access to an unknown distribution $\Pund$ over $\{0,\ldots,n\}$, test whether $\Pund \in \pbdabsz$, or $\dtv{\Pund}{\pbdabsz}>\dst$, where $\pbdabsz$ is the set of Poisson Binomial distributions over $n$ variables.

\medskip Besides any practical applications, the theoretical interest
in studying {\sc PbdTesting}, and for that matter testing membership
to other classes of distributions, stems from the fact that ``being a
Poisson Binomial distribution'' is not a symmetric property of a
distribution; hence the results of~\cite{ValiantV11a} cannot be
brought to bear. At the same time, ``being a Poisson Binomial
Distribution''  does not fall into the shape restrictions to a
distribution, such as
uniformity~\cite{GoldreichRon00,BatuFFKRW01,Paninski08, AcharyaJOS13b} or
monotonicity~\cite{BatuKR04}, for which (near-)optimal testing
algorithms have been obtained. While there has been a lot of work
on learning distributions from a class of
distributions~\cite{ChanDSS13,FeldmanOS05,MoitraV10, BelkinS10}, there
is still a large gap in our current knowledge about the complexity of
testing against general 
families of distributions, \costasnote{unless both the unknown
  distribution and the family have been restricted a
  priori~\cite{DaskalakisDSVV13,DiakonikolasKN14}.}

\jdnote{More broadly, our problem falls in the general framework of property
  testing~\cite{Goldreich98, Fischer01, Rubinfeld06, Ron08}, where the
  objective is to decide if an object  
  (in our case a distribution) has a certain property, or is \emph{far} from having the
  property. The objective is to solve such problems while minimizing the 
  number of queries to the object (in our case samples from the distribution) computationally efficiently. In the context of property testing of distributions, our objective here is different from
  the classic problem of (composite) hypothesis testing in
  statistics~\cite{Fisher25, LehmannR05},
  where the number of samples are taken to infinity to study error exponents and consistency.}

\smallskip An obvious approach to {\sc PbdTesting} is to learn a candidate Poisson Binomial distribution~$\Ptrue$ that is $\dst/2$-close to $\Pund$, if $\Pund$ truly is a Poisson Binomial distribution. This is known to be quite cheap, only requiring  $\tilde{O}(1/\dst^2)$ samples from $\Pund$~\cite{DaskalakisDS12}. We can then use a tolerant tester to test $\dtv{\Pund}{\Ptrue} \le \dst/2$ vs $\dtv{\Pund}{\Ptrue}>\dst$. Such a tester would allow us to distinguish $\Pund \in \pbdabsz$ vs $\dtv{\Pund}{\pbdabsz}>\dst$, as $\dtv{\Pund}{\Ptrue} \le \dst/2 \Leftrightarrow \Pund \in \pbdabsz$. 

Given that any $\Ptrue \in \pbdabsz$ has effective support $O(\sqrt{n \log{1/\dst}})$,\footnote{{\em Effective support} is the smallest  set of contiguous integers where the distribution places all but $\epsilon$ of its probability mass.} we can easily construct a tolerant tester that uses $\tilde{O}(\sqrt{n}/\dst^2)$ samples, resulting in overall sampling complexity of $\tilde{O}(\sqrt{n}/\dst^2)$. On the other hand, we do not see how to substantially improve this approach, given the lower bound of $\Omega(m/\log m)$ for tolerant identity testing distributions of support size $m$~\cite{ValiantV11a}.

\smallskip A somewhat different approach would circumvent the use of a tolerant identity tester, by exploiting the small amount of tolerance accommodated by known (non-tolerant) identity testers. For instance, \cite{BatuFFKRW01} show that, given a distribution $\Ptrue$ of support $m$ and $\tilde{O}(\sqrt{m}) \cdot {\rm poly}(1/\dst)$ samples from an unknown distribution $\Pund$ over the same support, one can distinguish $\dtv{\Pund}{\Ptrue}\le\frac{\dst^3}{4\sqrt{m}\log m}$ vs $\dtv{\Pund}{\Ptrue}>\dst$. Hence, we can try to first find a candidate $\Ptrue\in\pbdabsz$ that is $\frac{\dst^3}{4\sqrt{m}\log m}$-close to $\Pund$, if $\Pund\in\pbdabsz$, and then do (non-tolerant) identity testing against $\Ptrue$. In doing so, we can use $m = O(\sqrt{n \log{1/\dst}})$, since that is the worst case effective support of $\Pund$, if $\Pund\in\pbdabsz$. 

The testing step of this approach is cheaper, namely $\tilde{O}(n^{1/4}) \cdot {\rm poly}(1/\epsilon)$ samples, but now the learning step becomes more expensive, namely $\tilde{\Omega}(\sqrt{n}) \cdot {\rm poly}(1/\dst)$ samples, as the required learning accuracy is more extravagant than before.

\smallskip Is there then a fundamental barrier, imposing a sample complexity of $\tilde{\Omega}(\sqrt{n})$? We show that the answer is ``no,'' namely

\begin{Theorem} \label{thm:main theorem}
For $n, \dst, \err >0$, there exists an algorithm, {\tt Testing PBDs}, that uses
 \[
O\left(\frac{\absz^{1/4} \sqrt{\log (1/\dst)}}{\dst^2}+\frac{\costasnote{\log^{2.5}
  (1/\dst)}}{\dst^6}\right) \cdot \log(1/\err)
\]
independent samples from an unknown distribution $\Pund$ over
$\{0,\ldots,n\}$ and, with probability $\ge 1 - \err$, outputs \ypbd,
if $\Pund\in\pbdn$, and \npbd, if $\dtv{\Pund}{\pbdn}>\dst$. The time
complexity of the algorithm is 
\[
O\Big({\absz^{1/4}\sqrt{\log (1/\dst)}}/{\dst^2}+ (1/\dst)^{O(\log^21/\dst)}\Big) \cdot \costasnote{\log{(1/ \delta)}}.
\]
\end{Theorem}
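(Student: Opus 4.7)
\bigskip
\noindent\textbf{Proof plan.}
My plan is to leverage the Daskalakis--Diakonikolas--Servedio structural theorem for Poisson Binomial distributions: every $\Pund \in \pbdabsz$ is $O(\dst)$-close in total variation either to (i)~a ``sparse'' PBD whose effective support has size $O(1/\dst^3)$, or (ii)~the translated Poisson $\tpunder$ whose parameters match the mean $\mlt$ and variance $\sigma^2$ of $\Pund$. This dichotomy reduces the composite-hypothesis problem of testing against all of $\pbdabsz$ into identity testing against a small collection of explicit candidates, completely sidestepping the need for a tolerant identity tester and thereby the $\tilde\Omega(\sqrt{\absz})$ barrier described above.

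First I would draw $\tilde O(1/\dst^6)$ samples to form empirical estimates $\hmlt, \hvar^2$; standard concentration arguments should give $\dtv{\tpunder}{\tpest} = O(\dst)$ with high probability when $\Pund \in \pbdabsz$ is in the dense regime, with the $1/\dst^6$ scaling arising from needing a sufficiently small relative error on $\hvar^2$ so that the two translated Poissons are $O(\dst)$-close. For the dense branch I would then run an identity tester of $\Pund$ against $\tpest$; since $\tpest$ has effective support $m = O(\sqrt{\absz \log(1/\dst)})$, this can be done with $\tilde O(\sqrt{m}/\dst^2) = O(\absz^{1/4}\sqrt{\log(1/\dst)}/\dst^2)$ samples, matching the leading term. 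For the sparse branch I would enumerate an $O(\dst)$-cover $\cS$ of the sparse PBDs, known to have size $(1/\dst)^{O(\log^2(1/\dst))}$, and run a non-tolerant identity test of $\Pund$ against each candidate using $\tilde O(1/\dst^2)$ samples, boosting the per-test failure probability by $\log|\cS|$ for a union bound; this contributes the $(1/\dst)^{O(\log^2(1/\dst))}$ runtime term. The algorithm outputs \ypbd{} iff at least one of the candidate tests accepts. Completeness and soundness are immediate from the structural theorem: if $\Pund \in \pbdabsz$ some candidate is $O(\dst)$-close so at least one test passes, while if $\dtv{\Pund}{\pbdabsz} > \dst$ no candidate is within $\dst/3$ of $\Pund$ and every test rejects w.h.p. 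A final $O(\log(1/\err))$ amplification gives the claimed confidence.

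The main obstacle is making the dense-regime identity test go through with only the guarantee $\dtv{\Pund}{\tpest}=O(\dst)$, whereas an off-the-shelf non-tolerant identity tester on support $m$ requires the target to be within $O(\dst^3/(\sqrt{m}\log m))$ in total variation---an unaffordable accuracy for the learning step. To bridge this gap I would design a $\chi^2$-style identity tester tailored to translated Poisson targets, exploiting the smoothness of $\tpest$ to argue that the test statistic still concentrates around its ``yes'' value under $O(\dst)$ perturbations of the target, while any true $\dtv{\Pund}{\tpest}>\dst/2$ deviation still moves the statistic far enough to trigger rejection. The delicate part is tracking how parameter errors in $\hmlt, \hvar^2$ propagate through the translated-Poisson density and balancing them against the $\dst/2$ vs.\ $\dst$ separation required for rejection; getting these constants right---so that the $O(\dst)$ slack in the learning phase and the $\dst$ detection threshold remain compatible---is the technical core, and is precisely what forces both the $\sqrt{\log(1/\dst)}$ factor inside the $\absz^{1/4}$ term and the $1/\dst^6$ cost of moment estimation.
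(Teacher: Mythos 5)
Your overall architecture matches the paper's (sparse/dense dichotomy, a translated Poisson surrogate in the dense regime, a $\chi^2$-style test), but the quantitative core of the dense branch is missing, and as written the plan runs into exactly the $\tilde\Omega(\sqrt{\absz})$ barrier it is trying to avoid. If the yes case only guarantees $\dtv{\Pund}{\tpest}=O(\dst)$, then distinguishing it from $\dtv{\Pund}{\tpest}>\dst/2$ is a constant-tolerance $\ell_1$ identity test on a distribution of effective support $m=\Theta(\sqrt{\absz\log(1/\dst)})$, which requires $\Omega(m/\log m)$ samples; no amount of smoothness of $\tpest$ rescues a $\chi^2$ statistic here, because the yes-case bound $\elltwosq{\Pund}{\tpest}\le\ell_\infty(\Pund,\tpest)\cdot\ell_1(\Pund,\tpest)=O(\dst/\sigma)$ is already \emph{larger} than the no-case lower bound $\Omega\bigl(\dst^2/(\sigma\sqrt{\log(1/\dst)})\bigr)$ obtained from Cauchy--Schwarz over the effective support, so the two cases do not separate. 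The key move in the paper, which your plan lacks, is to make the yes-case $\ell_1$ distance \emph{sub-constant in the tolerance}: because $\Pund$ is itself a PBD, its mean and variance concentrate so sharply that $O(\absz^{1/4}/\dst^2)$ samples give estimates with relative error of order $\dst/\sigma^{1/4}$, whence (via the R\"ollin and Barbour--Lindvall bounds) $\dtv{\Pund}{\tpest}\le\dst^2/(C'\sqrt{\log(1/\dst)})$ once $\sigma^2\ge C\log^4(1/\dst)/\dst^8$. Only then, combined with $\ell_\infty(\Pund,\tpest)=O(1/\sigma)$, do the two $\ell_2^2$ values separate by an arbitrarily large constant factor, after which the Poissonized unbiased estimator of $\elltwosq{\Pund}{\tpest}$ with $k=O(\sqrt{\hvar\log(1/\dst)}/\dst^2)=O(\absz^{1/4}\sqrt{\log(1/\dst)}/\dst^2)$ samples succeeds by a second-moment argument. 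Your $\tilde O(1/\dst^6)$ moment-estimation step targeting only $O(\dst)$ closeness is therefore not an implementation detail but precisely where the proof breaks.

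Two smaller gaps. In the sparse branch, running \emph{non-tolerant} identity tests against an $O(\dst)$-cover cannot certify completeness: when $\Pund\in\pbdabsz$ it is merely $O(\dst)$-close to some cover element, not equal to it, and a non-tolerant tester is permitted to reject such a $\Pund$. A tolerant test is needed, and it is affordable exactly because the sparse support is $\mathrm{poly}(1/\dst)$: the paper learns a single hypothesis $\ppbd$ and compares the empirical distribution to it on an interval of length $O(\log^{2.5}(1/\dst)/\dst^4)$, costing $O(\log^{2.5}(1/\dst)/\dst^6)$ samples. In the dense branch, soundness also needs a step you omit: $\tpest$ is not a PBD, so $\dtv{\Pund}{\pbdabsz}>\dst$ does not by itself imply $\Pund$ is far from $\tpest$. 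The paper first verifies that $\tpest$ is within $\dst/2$ of the learned PBD $\ppbd$ (rejecting outright otherwise) and only then concludes $\dtv{\Pund}{\tpest}\ge 3\dst/10$ by the triangle inequality.
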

\noindent The proof of Theorem~\ref{thm:main theorem} can be found in Section~\ref{sec:proof of main result}. We also show that the dependence of our sample complexity on $n$
cannot be improved, by providing a matching lower bound in Section~\ref{app:lower_bound} as follows.
\begin{Theorem} \label{thm:lower bound}
Any algorithm for {\sc PbdTesting} requires $\Omega(n^{1/4}/\dst^2)$ samples. 
\end{Theorem}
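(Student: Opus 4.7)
The approach is a Paninski-style two-point, Ingster-style chi-square argument adapted to the PBD family. For the ``Yes'' instance I take $P_Y = \Bin(n,1/2)$, whose effective support is an interval $I$ of length $\Theta(\sqrt{n})$ centered at $n/2$ with per-point mass $p := \Theta(1/\sqrt{n})$. Partitioning the central portion of $I$ into $m = \Theta(\sqrt{n})$ disjoint consecutive pairs $(a_j, a_j+1)$, I define for each sign vector $\sigma \in \{\pm 1\}^m$ a candidate ``No'' instance $P_\sigma$ by $P_\sigma(a_j) = P_Y(a_j) + \sigma_j\eta$ and $P_\sigma(a_j+1) = P_Y(a_j+1) - \sigma_j\eta$, with $\eta := c\dst/\sqrt{n}$, and $P_\sigma \equiv P_Y$ elsewhere. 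By construction $\dtv{P_\sigma}{P_Y} = \Theta(\dst)$, so the PBD testing problem reduces to distinguishing $P_Y$ from a random $P_\sigma$, provided that each $P_\sigma$ is far from the entire class $\pbdabsz$ (not merely from $P_Y$).

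The first main ingredient is showing $\dtv{P_\sigma}{\pbdabsz} = \Omega(\dst)$ for every $\sigma$. I would use that every PBD is log-concave (equivalently, its PGF has only real non-positive roots), so the ratios $Q(k+1)/Q(k)$ of any $Q \in \pbdabsz$ are monotone non-increasing in $k$. In the central window, $P_Y$ has consecutive ratios $(n-k)/(k+1) \approx 1$, varying smoothly, while $P_\sigma$ forces these ratios to alternate between $1 + \Theta(\dst)$ and $1 - \Theta(\dst)$ according to $\sigma$. A monotone non-increasing sequence of ratios can match the $P_\sigma$-pattern on at most one ``regime'' of the random signs, so for a typical $\sigma \in \{\pm 1\}^m$ any candidate $Q \in \pbdabsz$ must disagree with $P_\sigma$ by $\Omega(\eta)$ in mass on a constant fraction of pairs, accumulating to $\Omega(m\eta) = \Omega(\dst)$ in TV distance.

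The second main ingredient is a Poissonization-based chi-square calculation showing that $P_Y$ and a uniformly random $P_\sigma$ cannot be distinguished from $k = o(n^{1/4}/\dst^2)$ samples. After Poissonization, bin counts become independent Poisson variables, so the chi-square divergence of the Poissonized mixture from the Poissonized $P_Y$ decomposes as a sum of per-pair contributions. Using the Poisson MGF $E[a^{X}] = e^{\lambda(a-1)}$ for $X \sim \Poisson(\lambda)$, one verifies that the linear-in-$\sigma_j$ terms cancel by symmetry, and the leading per-pair chi-square equals $\Theta((k\eta)^4/(kp)^2) = \Theta(k^2\dst^4/n)$. Summing over $m = \Theta(\sqrt{n})$ pairs gives total chi-square $\Theta(k^2\dst^4/\sqrt{n})$, which is $o(1)$ precisely when $k = o(n^{1/4}/\dst^2)$; a standard Le Cam / data-processing argument then converts this into the claimed sample lower bound. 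The hard part is the first ingredient (farness from \emph{all} PBDs), since it requires leveraging the full structure of the PBD class---namely that its closure in TV distance is essentially characterized by log-concavity together with the additional smoothness inherited from real-rootedness---to preclude that some PBD with non-uniform parameters could track the oscillations of $P_\sigma$.
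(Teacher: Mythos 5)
Your proposal is correct and follows essentially the same route as the paper's proof: the Yes instance is $\Binomial{n}{1/2}$, the No instances are a $\{\pm1\}^{\Theta(\sqrt n)}$-indexed family of $\Theta(\dst/\sqrt n)$-magnitude paired perturbations of it, farness from all of $\pbdabsz$ is forced by the $\Theta(\sqrt n)$ oscillations that no unimodal (in particular no log-concave, hence no PBD) distribution can track, and indistinguishability follows from a Poissonized chi-square/KL computation yielding $O(k^2\dst^4/\sqrt n)$. The only cosmetic differences are that the paper pairs $i$ with $n-i$ (multiplicative perturbation $1\pm c\dst$) rather than adjacent points, and its farness argument needs only unimodality plus a triangle-inequality claim rather than the full real-rootedness structure you invoke.
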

\noindent One might be tempted to deduce Theorem~\ref{thm:lower bound} from the lower bound  for identity testing against ${\rm Binomial}(\absz,1/2)$, which has been shown to require $\Omega(\absz^{1/4}/\dst^2)$ samples~\cite{Paninski08,ValiantV14}. However, testing against a class of distributions may very well be easier than testing against a specific member of the class. (As a trivial example consider the class of all distributions over $\{0,\ldots,n\}$, which are trivial to test.) Still, for the class $\pbdabsz$, we establish the same lower bound as for ${\rm Binomial}(\absz,1/2)$, deducing that the dependence of our sample complexity on $\absz$ is tight up to constant factors, while the dependence on $\dst$ of the leading term in our sample complexity is tight~up~to~a~logarithmic~factor.

\subsection{Related work and our approach} \label{sec:approach}

Our testing problem is intimately related to the following fundamental problems:

%



\medskip \noindent {\sc IdentityTesting:} Given a known distribution $\Ptrue$ and independent samples from an unknown distribution $\Pund$, which are both supported on $[m]:=\{0,\ldots,m\}$, determine whether $\Pund=\Ptrue$ {\sc Or} $\dtv{\Pund}{\Ptrue}>\dst$. If $\dtv{\Pund}{\Ptrue} \in (0,\dst]$, then any answer is allowed.

\medskip \noindent {\sc Tolerant-IdentityTesting:} Given a known distribution $\Ptrue$ and independent samples from an unknown distribution $\Pund$, which are both supported on $[m]$, determine whether $\dtv{\Pund}{\Ptrue} \le \dst/2$ {\sc Or} $\dtv{\Pund}{\Ptrue}>\dst$. If $\dtv{\Pund}{\Ptrue} \in (\dst/2,\dst]$, then any answer is allowed.

\bigskip It is known that {\sc IdentityTesting} can be solved from a
near-optimal $\tilde{O}(\sqrt{m}/\dst^2)$ number of
samples~\cite{BatuFFKRW01,Paninski08}. The guarantee is obviously
probabilistic: with probability $\ge 2/3$, the algorithm outputs
``equal,'' if $\Pund = \Ptrue$, and ``different,'' if
$\dtv{\Pund}{\Ptrue}>\dst$. On the other hand, even testing whether $\Pund$ equals the uniform distribution over $[m]$ requires $\Omega(\sqrt{m}/\dst^2)$ samples.

While the identity tester of~\cite{BatuFFKRW01} allows in fact a
little bit of tolerance (namely distinguishing $\dtv{\Pund}{\Ptrue}\le
\frac{\dst^3}{4\sqrt{m}\log m}$ vs $\dtv{\Pund}{\Ptrue}>\dst$), it
does not accommodate a tolerance of $\dst/2$. Indeed, \cite{ValiantV11a} show that there is a gap in the sample complexity of tolerant vs non-tolerant testing, showing that the tolerant version requires $\Omega(m/\log m)$ samples. 

%

\smallskip As discussed earlier, these results on identity testing in conjunction with the algorithm of~\cite{DaskalakisDS12} for learning Poisson Binomial distributions can be readily used to solve {\sc PbdTesting}, albeit with suboptimal sample complexity. Moreover, recent work of Valiant and Valiant~\cite{ValiantV14} pins down the optimal sampling complexity of {\sc IdentityTesting} up to constant factors for any distribution $\Ptrue$, allowing sample-optimal testing on an instance to instance basis. However, their algorithm is not applicable to {\sc PbdTesting}, as it allows testing whether an unknown distribution $\Pund$ equals a specific distribution $\Ptrue$ vs being $\epsilon$-far from $\Ptrue$, but not testing whether $\Pund$ belongs to a class of distributions vs being $\epsilon$-far from all distributions in the class. 

\paragraph{Our Approach.} What we find quite interesting is that our ``learning followed by tolerant testing'' approach seems optimal: The learning algorithm of~\cite{DaskalakisDS12} is optimal up to logarithmic factors, and there is strong evidence that tolerant identity testing a Poisson Binomial distribution requires $\tilde \Omega(\sqrt{n})$ samples. So where are we losing?

We observe that, even though {\sc PbdTesting} can be reduced to tolerant identity testing of the unknown distribution $\Pund$ to a single Poisson Binomial distribution $\Ptrue$, we cannot consider the latter problem out of context, shooting at optimal testers for it. Instead, we are really trying to solve the following problem:

\medskip {\sc Tolerant-(Identity+PBD)-Testing:} Given a known $\Ptrue$ and independent samples from an unknown distribution $\Pund$, which are both supported on $[m]$, determine whether ($\dtv{\Pund}{\Ptrue} \le \dst_1$ {\em {\sc And} $\Pund \in \pbdabsz$}) {\sc Or} ($\dtv{\Pund}{\Ptrue}>\dst_2$). In all other cases, any answer is allowed.

\medskip The subtle difference between {\sc
  Tolerant-(Identity+PBD)-Testing} and {\sc Tolerant-IdentityTesting} is
the added clause ``{\sc And} $\Pund \in \pbdabsz$,'' which, it turns
out, makes a big difference for certain $\Ptrue$'s. In particular, we would hope that, when $\Ptrue$
and $\Pund$ are Poisson Binomial distributions with about the same variance, then the $\ell_1$
bound $\dtv{\Pund}{\Ptrue} \le \dst_1$ implies a good enough $\ell_2$
bound, so that {\sc Tolerant-(Identity+PBD)-Testing} can be reduced to
tolerant identity testing in the $\ell_2$ norm. We proceed to sketch
the steps of our tester in more detail.

\medskip The first step is to run the learning algorithm of~\cite{DaskalakisDS12} on $\tilde O(1/\epsilon^2)$ samples from $\Pund$. The result is some $\ppbd\in\pbdn$ such that, if $\Pund \in \pbdn$ then, with probability $\ge .99$, $\dtv{\Pund}{\ppbd}<\dst/10$. We then bifurcate depending on the variance of the learned $\ppbd$. For some constant $C$ to be decided, we consider the following cases.
\begin{itemize}
\item {\bf Case 1: $\std^2(\ppbd)< \frac{\costasnote{C \cdot \log^4{1/\epsilon}}}{\dst^8}$}

In this case, $\ppbd$ assigns probability mass of $\ge 1-\epsilon/5$ to an interval $\cI$ of size $\costasnote{O(\log^{2.5}(1/\dst)/\dst^4)}$. 
If $\Pund\in\pbdn$, then the
$\ell_1$ distance between $\Pund$ and $\ppbd$ over $\cI$ is at most
$\dst/5$ (with probability at least $0.99$). If $\dtv{\Pund}{\pbdn}>\dst$, then over the same interval, the
$\ell_1$ distance is at least $4\dst/5$. We can therefore do
tolerant identity testing restricted to support $\cI$, with $O(|\cI|/\dst^2)=\costasnote{O({\log^{2.5} (1/\dst)}/\dst^6)}$ samples. To this end, we use a simple tolerant identity test whose sample complexity is tight up to a logarithm in the support size, and very easy to analyze. Its use here does not affect the dependence of the overall sample complexity on $\absz$.  

\item {\bf Case 2: $\std^2(\ppbd) \ge \frac{\costasnote{C \cdot \log^4{1/\epsilon}}}{\dst^8}$}

We are looking to reduce this case to a  {\sc Tolerant-(Identity+PBD)-Testing} task for an appropriate distribution $Q$ that will make the reduction to tolerant identity testing in the $\ell_2$ norm feasible. First, it follows from~\cite{DaskalakisDS12} that in this case we can  actually
assume that $\ppbd$ is a Binomial distribution. 
Next, we use $O(n^{1/4}/\dst^2)$ samples to obtain estimates $\hmlt$ and $\hvar^2$ for the mean and variance of $\Pund$, and consider  the Translated Poisson
distribution $\tpest$ with parameters $\hmlt$ and $\hvar^2$; see Definition~\ref{def:translated poisson}.
If $\Pund \in \pbdabsz$, then with good probability (i) $\hmlt$ and $\hvar^2$ are extremely accurate as characterized by Lemma~\ref{lem:mean_var}, and (ii) using the Translated Poisson approximation to the Poisson Binomial distribution, Lemma~\ref{lem:pbd_tp}, we can argue that $\dtv{\Pund}{\tpest}\le  \costasnote{\frac{\dst^2}{10}}$. 

Before getting to the heart of our test, we perform one last check. We calculate an estimate of $\dtv{\tpest}{\ppbd}$ that is accurate to within $\pm \dst/5$. If our estimate $\hat{d}_{TV}(\Pbin,\ppbd) > \dst/2,$ we can safely deduce $\dtv{\Pund}{\pbdn} > \dst$. Indeed, if $\Pund \in \pbdn$, we would have seen $\hat{d}_{TV}(\tpest,\ppbd) \le \dtv{\tpest}{\ppbd} + \dst/5 \le \dtv{\tpest}{\Pund}+\dtv{\Pund}{\ppbd} + \dst/5 \le   \frac{\dst^2}{10} +  \frac{\dst}{10} +  \frac{\dst}{5} < 2\dst/5.$


If $\hat{d}_{TV}(\tpest,\ppbd) \le \dst/2,$ then
\[
\ellone{\Pund}{\tpest}\ge \ellone{\Pund}{\ppbd}-\ellone{\ppbd}{\tpest}
\ge \ellone{\Pund}{\ppbd}-\frac{7\dst}{10}.
\]
At this point, there are two possibilities we need to distinguish between:  {\em either} $\Pund \in \pbdn$ and $\dtv{\Pund}{\tpest}\le  \frac{\dst^2}{10}$, {\em or} 
$\dtv{\Pund}{\pbdn} > \dst$  and $\dtv{\Pund}{\tpest} \ge  \frac{3\dst}{10}$.\footnote{\costasnote{The two cases identified here correspond to Cases~\ref{good case 1} and~\ref{good case 2} of Section~\ref{sec:proof of main result}, except that we include some logarithmic factors in the total variation distance bound in Case~\ref{good case 1} for minute technical reasons.}} We argue that we can use an $\ellto$ test to solve this instance of {\sc Tolerant-(Identity+PBD)-Testing}. 

\end{itemize}

Clearly, we can boost the probability of error to any $\err>0$ by
repeating the above procedure $\log (1/\err)$ times and outputting the
majority. Our algorithm is provided as Algorithm~\ref{alg:test_pbd}.

\section{Preliminaries} \label{sec:preliminaries}

%

We provide some basic definitions, and state results that will be useful in our analysis.
\costasnote{\begin{Definition} The {\em truncated logarithm} function $\rm tlog$ is defined as $\tlog(x) = \max\{1, \log x\}$, for all $x \in (0,+\infty)$, where $\log x$ represents the natural logarithm of $x$.
\end{Definition}}
\begin{Definition}\label{def:effective support}
Let $P$ be a distribution over $[\absz]=\{0\upto\absz\}$. The $\epsilon$-{\em effective support of $P$} is the length of the smallest interval where the distribution places all but at most $\epsilon$ of its probability mass.
\end{Definition} 
\begin{Definition}
A {\em Poisson Binomial Distribution (PBD) over $[\absz]$}
is the distribution of $X=\sum_{i=1}^{\absz}X_i$, where the $X_i$'s are (mutually)
independent Bernoulli random variables. $\pbdabsz$ is the set of all 
Poisson Binomial distributions over $[\absz]$.
\end{Definition} 

\begin{Definition}
The {\em total variation distance} between two distributions $P$ and $Q$ over
a finite set $A$ is $\ellone{P}{Q}\ed {1 \over 2} \sum_{i \in A}{|P(i)-Q(i)|}$.
The total variation distance between two sets of distributions $\cP$ and
$\cQ$ is $\ellone{\cP}{\cQ}\ed \inf_{P\in\cP,Q\in\cQ}\ellone{P}{Q}$.
\end{Definition} 

%
We make use of the following learning algorithm, allowing us to learn an unknown Poisson Binomial distribution over $[n]$ from $\tilde O(1/\dst^2)$ samples (independent of $n$). Namely,

\begin{Theorem}[\cite{DaskalakisDS12}]
\label{thm:cds12}
For all $n, \dst>0$, there is an algorithm that uses $\tO(1/\dst^2)$ samples from an
unknown $\Pund\in\pbdabsz$ and outputs some $\ppbd\in\pbdn$ such that
\begin{itemize}
\item 
 ${\ppbd}$ is supported on an interval of length $O(1/\dst^3)$, 
\item
or ${\ppbd}$ is a Binomial distribution.
\end{itemize}
Moreover, with probability $\ge 0.99$, $\ellone{\ppbd}{\Pund}<{\dst}$ and, if the algorithm outputs a Binomial distribution, the standard deviation of the output distribution $\ppbd$ is within a factor of $2$ of the standard deviation of the unknown distribution $\Pund$. Furthermore, the running time of the algorithm is
 $\tilde O(\log n) \cdot (1/\dst)^{O(\log^21/\dst)}$.\qed
\end{Theorem}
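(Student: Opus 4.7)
The plan is to establish a structural dichotomy for Poisson Binomial distributions and then exploit it via a standard learning framework. The key structural fact I would aim to prove first is that every $\Pund\in\pbdabsz$ is $\dst/3$-close in total variation distance to \emph{either} (i) a PBD whose $\dst$-effective support has size $O(1/\dst^3)$, or (ii) a translated Poisson distribution with matching first two moments. For (ii), one uses the classical Poisson approximation bound that $\dtv{\Pund}{TP(\mean,\std^2)} = O(1/\std)$, where $\std^2 = \sum_i p_i(1-p_i)$; this is small whenever $\std^2 \gtrsim 1/\dst^2$. For (i), when the variance is small, the fact $\Pr[|X - \mean| > t\std] \le 2e^{-t^2/2}$ gives an $\dst$-effective support of size $O(\std \sqrt{\log 1/\dst})$, which is $O(1/\dst^3)$ in the low-variance regime. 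I would then further observe that any translated Poisson with parameters $(\mean,\std^2)$ subject to $\std^2 \le \mean \le n - \std^2$ is $O(1/\std)$-close to a genuine Binomial $\Bin(n',p')$, by solving $n'p'(1-p') = \std^2$, $n'p' = \mean$ and rounding $n',p'$ to feasible values.

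Armed with this structural result, my algorithm would first draw $\tilde O(1/\dst^2)$ samples and compute empirical estimates $\hmlt,\hvar^2$ of the mean and variance; standard concentration (Chebyshev plus variance of the sample variance) gives additive accuracy $O(\std\cdot\dst)$ and multiplicative accuracy $(1\pm \dst)$ respectively, from $\tilde O(1/\dst^2)$ samples. I would then branch on whether $\hvar^2$ exceeds a threshold of order $\log^2(1/\dst)/\dst^2$. In the high-variance branch, I output the Binomial matching $(\hmlt,\hvar^2)$; the structural result together with concentration of $\hmlt, \hvar^2$ implies the output is $\dst$-close to $\Pund$, and the standard-deviation guarantee is automatic. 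The running time of this branch is polylogarithmic in $n$, arising only from representing the rounded Binomial parameters.

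In the low-variance branch, I would build, in a data-independent way, an $\dst$-cover $\mathcal C_\dst$ of the set of PBDs with $\dst$-effective support $O(1/\dst^3)$. Using a grid on the $n$ probabilities $p_i$ at resolution $\dst/n$ is much too costly, but after restricting attention to an interval of length $O(1/\dst^3)$ one can show—by standard moment-matching arguments on low-degree symmetric functions of the $p_i$'s—that $|\mathcal C_\dst| \le (1/\dst)^{O(\log^2 1/\dst)}$, independent of $n$. I would then run a Scheffé-style hypothesis-selection tournament over $\mathcal C_\dst$; this needs $O(\log|\mathcal C_\dst|/\dst^2) = \tilde O(1/\dst^2)$ samples and $\tO(\log n)\cdot (1/\dst)^{O(\log^2 1/\dst)}$ time (the $\log n$ only enters through representing candidate PBDs on $[n]$), and returns a distribution within $O(\dst)$ of the best candidate in $\mathcal C_\dst$, hence within $O(\dst)$ of $\Pund$.

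The main obstacle I anticipate is the cover construction: showing that after bucketing the $p_i$'s into $\log(1/\dst)$ buckets and replacing each bucket's contribution by its total mean (for small $p_i$'s) or a polylog-size moment-matched approximation (for $p_i$'s away from $0$ and $1$), the resulting family is both an $\dst$-cover and has quasi-polynomial size $(1/\dst)^{O(\log^2 1/\dst)}$. This is the technical heart of the argument and controls both the running time and the fact that the sample complexity is $\tO(1/\dst^2)$ rather than $\mathrm{poly}(\log n, 1/\dst)$. Everything else—moment estimation, Poisson/Binomial approximation, and the tournament—is by now fairly standard.
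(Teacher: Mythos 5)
You should first note that the paper does not prove this statement at all: it is Theorem~\ref{thm:cds12}, imported verbatim from \cite{DaskalakisDS12} and used as a black box (the \qed follows the statement with no argument). So there is no in-paper proof to compare against; what you have written is a reconstruction of the proof in the cited work. Your outline does track the actual structure of that proof: a structural dichotomy (every PBD is $\dst$-close to either a ``sparse form'' PBD supported on an interval of length $O(1/\dst^3)$ or a heavy Binomial form), moment estimation plus translated-Poisson/Binomial approximation in the high-variance branch, and a Scheff\'e-style tournament over a quasi-polynomial-size cover in the low-variance branch. The quoted running time $\tilde O(\log n)\cdot(1/\dst)^{O(\log^2 1/\dst)}$ and sample complexity $\tO(1/\dst^2)$ are exactly what that strategy yields.

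Two points in your sketch need repair. First, the cover in the sparse branch cannot be built ``in a data-independent way'' with size $(1/\dst)^{O(\log^2 1/\dst)}$ independent of $n$: the $O(1/\dst^3)$-length effective-support interval can sit anywhere in $\{0,\ldots,n\}$, so a data-independent cover of all such PBDs has size at least linear in $n$, and a tournament over it would cost $\Omega(\log n/\dst^2)$ samples, breaking the claimed $n$-independence of the sample complexity. The fix (and what \cite{DaskalakisDS12} do) is to first locate the interval empirically from $O(1/\dst^2)$ samples (e.g.\ via order statistics/quantiles), and only then enumerate a location-anchored cover of quasi-polynomial size; the $\log n$ then enters only the running time through representing the candidates. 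Second, the guarantee that the output Binomial's standard deviation is within a factor of $2$ of $\std(\Pund)$ is not ``automatic'' from a $(1\pm\dst)$-multiplicative variance estimate alone; you also need that the heavy branch is only entered when $\std^2$ exceeds the threshold (so the additive fluctuation of the empirical variance, of order $\std^2\dst\sqrt{4+1/\std^2}$ as in Lemma~\ref{lem:mean_var}, is indeed a small multiplicative perturbation), and you should say explicitly that the branching threshold guarantees this. With those two repairs your outline is a faithful account of the cited proof; the genuinely hard part, as you correctly identify, is the moment-matching cover construction, which is the main technical content of \cite{DaskalakisDS12} and is not something the present paper reproves.
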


In the same paper the authors show that the mean and variance of a Poisson Binomial distribution can be estimated using a few samples. They use  the empirical mean and variance estimates and bound the means
and variances of these estimates. 
\begin{Lemma}[\cite{DaskalakisDS12}]
\label{lem:mean_var}
For all $\dst'>0$, there is an algorithm that, using $O(1/\dst'^2)$ samples from an unknown \Pbd\ with mean $\mlt$ and variance $\sigma^2$, produces estimates $\meanp$, $\varp$
such that  
\[
|\mean-\meanp|<\dst'\cdot\sigma \ \ \ and \ \ \
|\sigma^2-\varp^2|<\dst'\cdot \sigma^2\sqrt{4+\frac1{\sigma^2}},
\]
with probability $\ge .99$.\qed
\end{Lemma}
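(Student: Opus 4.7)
The plan is to use the obvious plug-in estimators: draw $k = C/\dst'^2$ independent samples $X_1,\ldots,X_k$ from the unknown PBD (for a suitably large absolute constant $C$), and set
\[
\meanp = \frac1k\sum_{j=1}^k X_j, \qquad \varp^2 = \frac1{k-1}\sum_{j=1}^k (X_j - \meanp)^2.
\]
Both estimates are unbiased: $\EE[\meanp]=\mean$ and $\EE[\varp^2]=\sigma^2$. The entire argument then reduces to controlling their variances and applying Chebyshev's inequality (plus a union bound at the end), with the one nontrivial piece being a bound on the fourth central moment of a PBD.

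For the mean, $\Var(\meanp)=\sigma^2/k$, so Chebyshev's inequality gives $|\meanp-\mean|\le 10\,\sigma/\sqrt{k}<\dst'\sigma$ with probability at least $0.99$, as soon as $k\ge 10^4/\dst'^2$. This is immediate.

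The heart of the argument is the variance estimate. For i.i.d.\ samples, the standard identity
\[
\Var(\varp^2) = \frac1k\Paren{\mu_4 - \frac{k-3}{k-1}\sigma^4} \le \frac{\mu_4-\sigma^4+O(\sigma^4/k)}{k},
\]
where $\mu_4=\EE[(X-\mean)^4]$, reduces the problem to upper bounding $\mu_4-\sigma^4$ for a PBD. I would compute this directly from the PBD structure: write $X=\sum_{i=1}^\absz X_i$ with $X_i\sim\text{Bern}(p_i)$ independent, let $Y_i = X_i-p_i$, and expand
\[
\mu_4 = \EE\Brack{\Paren{\sum_i Y_i}^4} = \sum_i \EE[Y_i^4] + 3\sum_{i\ne j}\EE[Y_i^2]\EE[Y_j^2],
\]
using that independence and zero means kill all other cross terms. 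Since $\EE[Y_i^4]=p_i(1-p_i)\,[1-3p_i(1-p_i)]\le p_i(1-p_i)$ and $\sum_{i\ne j}\EE[Y_i^2]\EE[Y_j^2]\le\sigma^4$, we get $\mu_4 \le \sigma^2 + 3\sigma^4$, hence $\mu_4-\sigma^4 \le \sigma^2 + 2\sigma^4 \le \sigma^4\,(4+1/\sigma^2)$. Plugging this into the variance formula and applying Chebyshev yields
\[
|\varp^2-\sigma^2| \le 10\sqrt{\Var(\varp^2)} \le \frac{10\,\sigma^2\sqrt{4+1/\sigma^2}}{\sqrt{k}} < \dst'\cdot\sigma^2\sqrt{4+\tfrac1{\sigma^2}}
\]
with probability at least $0.99$, again once $k=\Omega(1/\dst'^2)$.

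The one subtlety worth flagging is the fourth-moment computation: one must remember to include the $3\sum_{i\ne j}\EE[Y_i^2]\EE[Y_j^2]$ term (which produces the $3\sigma^4$ contribution and is what forces the $\sqrt{4+1/\sigma^2}$ shape in the statement), and one must drop the negative $-3p_i^2(1-p_i)^2$ contribution inside $\EE[Y_i^4]$ to get a clean upper bound. Everything else is just Chebyshev with an appropriate constant. A final union bound over the two failure events (inflating $C$ slightly) yields joint success with probability $\ge 0.99$.
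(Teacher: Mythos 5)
Your argument is correct, and it matches the approach the paper attributes to the source: this lemma is imported from \cite{DaskalakisDS12} without proof, and the paper's only description of the method is that one uses the empirical mean and variance and bounds the variances of these estimators, which is exactly what you do. Your fourth-central-moment computation for a PBD ($\mu_4 = \sum_i p_i(1-p_i)[1-3p_i(1-p_i)] + 3\sum_{i\neq j}p_i(1-p_i)p_j(1-p_j) \le \sigma^2 + 3\sigma^4$) is the right calculation and correctly produces the $\sigma^2\sqrt{4+1/\sigma^2}$ shape of the error bound, so your write-up is a valid self-contained proof of the cited statement.
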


Since Poisson Binomial variables are sums of indicators, the following
 bound is also helpful.
\begin{Lemma}[Chernoff bound for sums of Indicators~\cite{Levchenko}]
\label{lem:chernoff_indicator}
Let $X_1,\ldots,X_n$ be independent Bernoulli random variables, $X=X_1+\ldots+X_n$, and $\sigma^2=\Var(X)$. Then, for
$0<\lambda<2\sigma$,  
\[
\prob{|X-\EE[X]|>\lambda\sigma}<2e^{-\lambda^2/4}.
\]
\end{Lemma}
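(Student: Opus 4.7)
The statement is a two-sided variance-based tail bound, so my plan is to derive it as a specialization of Bernstein's inequality applied to the centered Bernoullis $Y_i := X_i - \EE[X_i]$. These are independent, mean-zero, supported in $[-1,1]$, with $\sum_{i=1}^{n}\Var(Y_i) = \sigma^2$. Starting from the Chernoff bound
\[
\prob{X - \EE[X] > t} \le e^{-st}\prod_{i=1}^{n} \EE\bigl[e^{sY_i}\bigr] \qquad (s > 0),
\]
it suffices to estimate each $\EE[e^{sY_i}]$ in a way that reflects $\Var(Y_i)$, not just the boundedness of $Y_i$.

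The central estimate is the elementary inequality $e^{sy} \le 1 + sy + y^2(e^s - 1 - s)$, valid for $s \ge 0$ and $|y| \le 1$ (verified term-by-term from the Taylor series using $|y|^k \le y^2$). Taking expectations and using $\EE[Y_i]=0$ gives $\EE[e^{sY_i}] \le \exp\!\bigl(\Var(Y_i)(e^s-1-s)\bigr)$, and by independence $\prod_i \EE[e^{sY_i}] \le \exp\!\bigl(\sigma^2(e^s-1-s)\bigr)$. Using the routine inequality $e^s - 1 - s \le \frac{s^2/2}{1-s/3}$ for $s \in [0,3)$ and optimizing at $s = t/(\sigma^2 + t/3)$ collapses the resulting exponent to $-t^2/(2\sigma^2 + 2t/3)$. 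Repeating the same argument for $-Y$ and union-bounding yields Bernstein's inequality
\[
\prob{|X - \EE[X]| > t} \le 2\exp\!\left(-\frac{t^2}{2\sigma^2 + 2t/3}\right).
\]

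To finish, I set $t = \lambda\sigma$. The hypothesis $\lambda < 2\sigma$ gives $2\sigma^2 + 2\lambda\sigma/3 < 2\sigma^2 + 4\sigma^2/3 = 10\sigma^2/3$, so the exponent is strictly less than $-3\lambda^2/10 < -\lambda^2/4$, which is the claimed bound. The main obstacle I anticipate is obtaining the variance-aware MGF bound rather than a naive one: the standard multiplicative Chernoff bound produces something like $\exp(-\lambda^2\sigma^2/(3\EE[X]))$, which is too weak whenever $\sigma^2 \ll \EE[X]$ (e.g., when many of the $p_i$ are close to $1$). It is precisely the tighter, variance-sensitive MGF bound above that lets the inequality depend only on $\sigma$ and not on $\EE[X]$; everything else reduces to algebra and optimization of $s$.
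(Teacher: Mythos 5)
Your proof is correct. The paper does not actually prove this lemma---it imports it verbatim from the cited reference---so there is no internal argument to compare against. Your derivation is the standard Bernstein route and all the steps check out: the term-by-term bound $e^{sy}\le 1+sy+y^2(e^s-1-s)$ for $|y|\le 1$, the variance-sensitive MGF bound $\prod_i\EE[e^{sY_i}]\le\exp(\sigma^2(e^s-1-s))$, the estimate $e^s-1-s\le\frac{s^2/2}{1-s/3}$ via $k!\ge 2\cdot 3^{k-2}$, and the choice $s=t/(\sigma^2+t/3)$ all combine correctly to give the exponent $-t^2/(2\sigma^2+2t/3)$; with $t=\lambda\sigma$ and $\lambda<2\sigma$ this is at most $-3\lambda^2/10<-\lambda^2/4$, which is exactly where the hypothesis $\lambda<2\sigma$ is needed. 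Your closing remark is also the right diagnosis of why the naive multiplicative Chernoff bound (whose exponent scales with $\EE[X]$ rather than $\Var(X)$) would not suffice here.
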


The following result is a tail bound on Poisson random variables
obtained from the Chernoff Bounds~\cite{MacherU05}.
\begin{Lemma}[\cite{AcharyaDJOPS12}]
\label{lem:tail_poi}
If $X$ is a Poisson $\lambda$ random variable, then for $x\ge\lambda$,
\[
\Pr(X \geq x) \leq \exp \left( -\frac{(x-\lambda)^2}{2x}\right),
\]
and for $x\le\lambda$,
\[
\Pr(X \leq x) \leq \exp \left( -\frac{(x-\lambda)^2}{2\lambda}\right).
\]
\end{Lemma}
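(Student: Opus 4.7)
The plan is to prove both tails by the standard Chernoff-bound recipe applied to the Poisson moment generating function, and then reduce the resulting transcendental bound to a simple analytic inequality. Recall that for $X \sim \mathrm{Poisson}(\lambda)$ one has $\EE[e^{tX}] = e^{\lambda(e^t - 1)}$ for all real $t$. For the upper tail with $x \ge \lambda$, Markov's inequality applied to $e^{tX}$ gives
\[
\Pr(X \ge x) \;\le\; e^{-tx}\,\EE[e^{tX}] \;=\; \exp\!\bigl(\lambda(e^t-1) - tx\bigr)
\]
for every $t > 0$. Minimizing the exponent in $t$ yields the optimum $t^\star = \log(x/\lambda) \ge 0$, and substituting back produces the clean bound
\[
\Pr(X \ge x) \;\le\; \exp\!\Bigl(x - \lambda - x\log\tfrac{x}{\lambda}\Bigr).
\]
An analogous argument with $t < 0$ and the same optimum $t^\star = \log(x/\lambda) \le 0$ (which is legitimate since Markov applied to $e^{tX}$ with $t < 0$ yields a lower-tail bound) gives exactly the same exponent for $\Pr(X \le x)$ when $x \le \lambda$.

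It then remains to show two elementary inequalities between the exact exponent $\phi(x,\lambda) := x - \lambda - x\log(x/\lambda)$ and the Gaussian-looking expressions in the statement. For the upper tail, set $u := (x-\lambda)/x \in [0,1)$, so that $\lambda = x(1-u)$ and $\phi(x,\lambda) = xu + x\log(1-u)$; I would then expand $\log(1-u) = -u - u^2/2 - u^3/3 - \cdots$ and observe that $u + \log(1-u) + u^2/2 = -\sum_{k \ge 3} u^k/k \le 0$, giving $\phi(x,\lambda) \le -xu^2/2 = -(x-\lambda)^2/(2x)$. For the lower tail, set $v := (\lambda-x)/\lambda \in [0,1]$, so that $x = \lambda(1-v)$ and $\phi(x,\lambda) = -\lambda v - \lambda(1-v)\log(1-v)$; the required inequality becomes $g(v) := v + (1-v)\log(1-v) - v^2/2 \ge 0$. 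I would verify $g(0) = 0$ and $g'(0) = 0$, then compute $g''(v) = v/(1-v) \ge 0$ on $[0,1)$, so $g$ is convex with a minimum at $0$ and thus nonnegative, which is equivalent to $\phi(x,\lambda) \le -(x-\lambda)^2/(2\lambda)$.

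I do not expect a genuine obstacle here; the two steps are completely mechanical. If anything, the only care needed is keeping track of signs in the lower-tail Chernoff step (the optimizer $t^\star$ is negative, and one must apply Markov to $e^{tX} \ge e^{tx}$ in the flipped-inequality form) and ensuring in the analytic step that the Taylor-series / convexity arguments are valid throughout $[0,1)$, which they are since $\log(1-u)$ is real-analytic on that interval. Combining these gives exactly the two displayed bounds.
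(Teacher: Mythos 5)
Your proof is correct, and it follows exactly the route the paper itself gestures at: the paper states this lemma without proof, citing it as a consequence of the Chernoff/moment-generating-function bound for the Poisson distribution, which is precisely your argument (optimize $\exp(\lambda(e^t-1)-tx)$ over $t$, then dominate the resulting entropy-type exponent by the quadratic via the series/convexity estimates). Nothing further is needed.
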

Poisson Binomial distributions are specified by $n$ parameters, and consequently there has been
a great deal of interest in approximating them via distributions with only a
few parameters. One such class of distributions are \emph{Translated
  Poisson distributions}, defined next.
  
\begin{Definition}[\cite{Rollin07}] \label{def:translated poisson}
A {\em Translated Poisson distribution}, denoted $\tpund(\mlt,\sigma^2)$, is the distribution of a
random variable $Y=\lfloor\mlt-\sigma^2\rfloor + Z$, where $Z$ is a random variable distributed according to the Poisson distribution $\poi{\sigma^2+\{\mlt-\sigma^2\}}$. Here $\lfloor x\rfloor$ and  $\{x\}$ denote respectively the integral and fractional parts of $x$ respectively.
\end{Definition}

The following lemma bounds the total variation and $\ell_\infty$ distance
between a Poisson Binomial and a {Translated Poisson} distribution with the same mean and variance. The bound on total variation distance is taken directly from~\cite{Rollin07}, while the $\ell_\infty$ bound is obtained via simple substitutions in their  $\ell_\infty$ bound.  
\begin{Lemma}[\cite{Rollin07}]
\label{lem:pbd_tp}
Let $X=\sum_i X_i$, where the $X_i$'s are independent Bernoulli random variables, $X_i\sim \bern(p_i)$. Also, let $q_{\max}=\max_k
\Pr(X=k)$, $\mlt=\sum p_i$ and $\sigma^2=\sum p_i(1-p_i)$. The following hold: 
\begin{align*}
\ellone{X}{\tpunder}&\le
\frac{2+\sqrt{\sum p_i^3(1-p_i)}}{\sum  p_i(1-p_i)};\\ 
\ell_{\infty}\left(X,  \tpund(\mlt,\sigma^2)\right)&\le \frac{2+2\sqrt{q_{\max}\sum
    p_i^3(1-p_i)}}{\sum  p_i(1-p_i)};\\
q_{\max}&\le \ellone{X}{ \tpund(\mlt,\sigma^2)}+ \frac1{2.3\sigma}.
\end{align*}
\end{Lemma}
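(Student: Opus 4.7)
My plan is to derive all three bounds via Stein's method for the translated Poisson, following the classical Chen--Stein framework. Set $c=\lfloor\mlt-\sigma^2\rfloor$ and $\lambda=\sigma^2+\{\mlt-\sigma^2\}$, and consider the Stein operator $\mathcal{A}g(k)=\lambda\,g(k+1)-(k-c)\,g(k)$; since $\tpunder$ is $c$ shifted by a $\poi{\lambda}$ variable, the translated Poisson is characterized by $\EE[\mathcal{A}g(W)]=0$ for every bounded $g\colon\ZZ\to\RR$ with $W\sim\tpunder$. For each test function $h\colon\ZZ\to\RR$ I would solve $\mathcal{A}g_h=h-\EE[h(W)]$ explicitly by telescoping against the shifted Poisson weights, producing a Stein solution whose first and second forward differences $\Delta g_h, \Delta^2 g_h$ admit sharp uniform bounds. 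Then $\EE[h(X)]-\EE[h(W)]=\EE[\mathcal{A}g_h(X)]$, and the three inequalities will correspond to three different choices of $h$ together with matching estimates on $g_h$.

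The heart of the computation is common to all three bounds. Using $X=X_i+X^{(i)}$ with $X^{(i)}$ independent of $X_i$, the Bernoulli identity $\EE[X_i\,g(X)]=p_i\,\EE[g(X^{(i)}+1)]$ combined with the one-step expansion $\EE[g(X+1)]=(1-p_i)\EE[g(X^{(i)}+1)]+p_i\EE[g(X^{(i)}+2)]$ and $\sigma^2=\sum_i p_i(1-p_i)$ yields
\[
\EE\bigl[\mathcal{A}g(X)\bigr]=\sum_i p_i(1-p_i)\,\EE\bigl[\Delta g(X^{(i)}+1)-\Delta g(X^{(i)})\bigr]+R,
\]
where $R$ is an $O(1)$ remainder that absorbs the replacement of $\mlt-\sigma^2$ by its floor $c$. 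Bounding the inner expectation by $\|\Delta^2 g_h\|_\infty\cdot\EE[X_i]=p_i\|\Delta^2 g_h\|_\infty$ produces a master estimate proportional to $\sum_i p_i^2(1-p_i)\,\|\Delta^2 g_h\|_\infty$ plus $R$, after which everything reduces to uniform control of $g_h$ and its differences.

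For the total variation bound I would take $h=\II_A$ for an arbitrary $A\subseteq\ZZ$ and invoke the Barbour--Eagleson magic-factor estimates $\|\Delta g_h\|_\infty\le 1/\lambda$ and $\|\Delta^2 g_h\|_\infty\le 1/\lambda$; substituting into the master estimate and dividing through by $\sigma^2$ (comparable to $\lambda$) gives the first inequality. For the $\ell_\infty$ bound I would take $h=\II_{\{k_0\}}$; rewriting $g_h$ as a cumulative sum against $\poi{\lambda}$ and applying Cauchy--Schwarz near the Poisson mode yields the sharper estimate $\|g_h\|_\infty\le\sqrt{q_{\max}/\lambda}$, whose propagation through the same expansion produces the $\sqrt{q_{\max}\sum p_i^3(1-p_i)}$ term. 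Finally, for the $q_{\max}$ bound, the identity $d_{TV}(P,Q)=\sum_{k:P(k)>Q(k)}(P(k)-Q(k))$ implies that, for every $k_0$, $\Pr(X=k_0)-\tpunder(k_0)\le\ellone{X}{\tpunder}$, so $q_{\max}\le\max_k\tpunder(k)+\ellone{X}{\tpunder}$; combining this with the Stirling bound $\max_k\poi{\lambda}(k)\le 1/\sqrt{2\pi\lambda}<1/(2.3\sigma)$ closes the argument. The main obstacle will be the $\sqrt{q_{\max}/\lambda}$ estimate on $\|g_h\|_\infty$ for a point-mass test function: the naive telescoping only delivers an $O(1)$ bound, and extracting the $\sqrt{q_{\max}}$ factor requires careful cancellation between the cumulative sum defining $g_h$ and the translated Poisson profile near its mode.
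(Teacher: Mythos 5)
The paper does not actually prove this lemma: it is imported verbatim from R\"ollin's work (the text only notes that the $\ell_\infty$ bound follows by ``simple substitutions'' in R\"ollin's estimates), so your proposal is being measured against the cited source rather than an in-paper argument. Your overall framework (a Stein/Chen-type characterization of the translated Poisson, the decomposition $X=X_i+X^{(i)}$, and the derivation of the third inequality from $\Pr(X=k_0)-\tpunder(k_0)\le \ellone{X}{\tpunder}$ plus a Stirling bound on the Poisson mode) is the right one, and the third bound as you argue it is fine.

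There is, however, a genuine gap in your ``master estimate,'' and it is fatal to the first two inequalities. Bounding the error by $\sum_i p_i^2(1-p_i)\,\|\Delta^2 g_h\|_\infty$ and then using the magic factor $\|\Delta^2 g_h\|_\infty = O(1/\lambda)$ yields a bound of order $\sum_i p_i^2(1-p_i)/\sigma^2$, which is \emph{not} dominated by $\sqrt{\sum_i p_i^3(1-p_i)}/\sigma^2$: for ${\rm Binomial}(n,1/2)$ your estimate gives $\tfrac{n/8}{n/4}=\tfrac12$, a vacuous constant, while the lemma asserts $O(1/\sqrt{n})$. The square root cannot be produced by taking absolute values inside the sum over $i$; it comes from keeping the error in the aggregated form
\begin{align*}
\EE\bigl[\mathcal{A}g(X)\bigr]=\EE\Bigl[\Bigl(\textstyle\sum_i p_iX_i-\EE\bigl[\sum_i p_iX_i\bigr]\Bigr)\,\Delta g(X)\Bigr]+R,\qquad |R|\le \|\Delta g\|_\infty,
\end{align*}
(using $\sum_i p_i^2\,\EE[\Delta g(X^{(i)}+1)]=\EE[(\sum_i p_iX_i)\Delta g(X)]$ and $c=\lfloor\mu-\sigma^2\rfloor$, $\sum_i p_i^2-c=\{\mu-\sigma^2\}\in[0,1)$), and then applying $\EE\bigl|V-\EE V\bigr|\le\sqrt{\Var(V)}=\sqrt{\sum_i p_i^3(1-p_i)}$ to the centered random coefficient $V=\sum_i p_iX_i$ before invoking $\|\Delta g\|_\infty\le 1/\lambda\le 1/\sigma^2$. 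This is exactly the cancellation your term-by-term bound discards, and it is also what feeds the $\sqrt{q_{\max}\sum p_i^3(1-p_i)}$ factor in the $\ell_\infty$ bound (there one applies Cauchy--Schwarz as $\sqrt{\Var(V)}\cdot\sqrt{\EE[\Delta g(X)^2]}$ and controls $\EE[\Delta g(X)^2]$ via $q_{\max}$). As written, your argument would prove only the much weaker bound without the square root.
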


Finally, the total variation distance between two Translated Poisson distributions can be bounded as follows. 
\begin{Lemma}[\cite{BarbourL07}]
\label{lem:barbour}
Let $\tpone$ and $\tptwo$ be Translated Poisson distributions with
parameters $(\mltone$, $\sigmaone)$ and $(\mlttwo$, $\sigmatwo)$
respectively. Then, 
\[
\ellone{\tpone}{\tptwo}\le \frac{|\mltone-\mlttwo|}{\min\{\sigma_1,\sigma_2\}}+\frac{|\sigmaone-\sigmatwo|+1}{\min\{\sigmaone,\sigmatwo\}}.
\]
\end{Lemma}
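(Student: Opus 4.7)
The bound follows from the triangle inequality combined with classical Stein's-method estimates for Poisson approximation. Unpack the definitions: write $\tpone = a_1 + W_1$ and $\tptwo = a_2 + W_2$, where $a_i = \lfloor \mu_i - \sigma_i^2 \rfloor$ and $W_i \sim \poi{\lamb{i}}$ with $\lamb{i} = \sigma_i^2 + \{\mu_i - \sigma_i^2\}$; note the identities $a_i + \lamb{i} = \mu_i$ and $\sigma_i^2 \le \lamb{i} < \sigma_i^2 + 1$, so in particular $\sqrt{\lamb{i}} \ge \sigma_i$. Introducing the intermediate distribution $P' := a_1 + \poi{\lamb{2}}$ and applying the triangle inequality gives
\[
\ellone{\tpone}{\tptwo} \;\le\; \ellone{\poi{\lamb{1}}}{\poi{\lamb{2}}} \;+\; \ellone{\poi{\lamb{2}}}{(a_2 - a_1) + \poi{\lamb{2}}},
\]
which cleanly separates the ``change of Poisson rate'' and the ``integer shift'' contributions.

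Each of these two pieces I would control with standard Stein bounds as found, e.g., in Barbour--Holst--Janson's \emph{Poisson Approximation}: the rate-change piece satisfies $\ellone{\poi{\alpha}}{\poi{\beta}} \le |\alpha-\beta|/\max(\sqrt{\alpha},\sqrt{\beta})$, and the shift piece satisfies $\ellone{\poi{\alpha}}{c + \poi{\alpha}} \le |c|/\sqrt{\alpha}$ for integer $c$ (obtained by iterating the one-step Stein bound). Substituting $|\lamb{1} - \lamb{2}| \le |\sigma_1^2 - \sigma_2^2| + 1 = (\sigma_1 + \sigma_2)|\sigma_1 - \sigma_2| + 1$ and $|a_2 - a_1| \le |\mu_1 - \mu_2| + |\sigma_1^2 - \sigma_2^2| + 1$ (both derived from the definitions of $a_i,\lamb{i}$ together with $a_i+\lamb{i}=\mu_i$) expands the raw bound into terms proportional to $|\mu_1 - \mu_2|$, $|\sigma_1 - \sigma_2|$, and the constant $1$, each divided by some $\sigma_i$ or $\sqrt{\lamb{i}}$.

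The main technical obstacle will be the algebraic cleanup that collapses these raw bounds into the exact form $\frac{|\mu_1 - \mu_2|}{\min(\sigma_1,\sigma_2)} + \frac{|\sigma_1 - \sigma_2| + 1}{\min(\sigma_1,\sigma_2)}$. Three ingredients are needed: (i) cancel the factor $(\sigma_1 + \sigma_2)$ appearing next to $|\sigma_1 - \sigma_2|$ against the denominator $\max(\sigma_1, \sigma_2)$ via $(\sigma_1+\sigma_2)/\max(\sigma_1,\sigma_2) \le 2$; (ii) ensure the final denominator is $\min(\sigma_1,\sigma_2)$ rather than $\max(\sigma_1,\sigma_2)$ by, if needed, rerunning the triangle inequality through the alternative intermediate $a_2 + \poi{\lamb{1}}$ and taking the better of the two bounds; and (iii) fold the floor/fractional-part slop into the ``$+1$'' in the numerator. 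This matches the argument of Barbour--Lindsay and delivers the claimed inequality.
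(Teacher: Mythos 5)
The paper does not actually prove this lemma---it is imported verbatim from~\cite{BarbourL07}---so your plan can only be judged on its own terms, and it has a genuine gap: the proposed decomposition cannot yield the stated bound. Your triangle inequality through $a_1+\poi{\lamb 2}$ splits the perturbation into a rate change $\lamb 1\to\lamb 2$ and an integer shift by $a_2-a_1$, and you then control each leg by a first-order estimate of the form $|\text{perturbation}|/\sqrt{\lambda}$. But each of those legs really is of order $|\lamb 1-\lamb 2|/\sqrt{\lambda}$ (shifting the mean of $\poi{\lambda}$ by $\theta\le\sqrt{\lambda}$ changes it in total variation by $\Theta(\theta/\sqrt{\lambda})$, so no sharper one-leg bound exists), whereas the lemma asserts that when $\mltone=\mlttwo$ the two endpoints are only $O\big((|\sigmaone-\sigmatwo|+1)/\min\{\sigmaone,\sigmatwo\}\big)$ apart---note the denominator is the \emph{variance}, not the standard deviation (you appear to have misread $\sigmaone=\sigma_1^2$ as $\sigma_1$ in your ``target form''). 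Concretely, with $\mltone=\mlttwo$ and $\sigmatwo=\sigmaone+\varepsilon$ one gets $|\lamb 1-\lamb 2|\approx|a_1-a_2|\approx 1$, so your route gives $\Theta(1/\sigma)$ while the lemma claims $O(1/\sigma^2)$. The weaker bound with $\min\{\sigma_1,\sigma_2\}$ in the second denominator is not enough for the paper: Claim~\ref{lem:p_tp} feeds $|\sigma^2-\hat{\sigma}^2|\approx\dst\sigma^{7/4}$ into this lemma and needs the division by $\sigma^2$ to obtain $O(\dst/\sigma^{1/4})$; dividing by $\sigma$ instead gives $\dst\sigma^{3/4}$, which diverges.

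The missing idea is that the rate change and the integer shift cancel to first order: since $a_i+\lamb i=\mlt_i$, they move the mean in opposite directions by almost exactly the same amount, and one must exploit this cancellation rather than bound the two legs separately. The Barbour--Lindvall argument isolates the genuine mean displacement $|\mltone-\mlttwo|$, which is charged at the first-order rate $1/\sqrt{\lambda}$ and produces the lemma's first term, and treats the residual \emph{mean-preserving} perturbation $\poi{\lambda}\mapsto\delta_m * \poi{\lambda-m}$, $m\in\ZZ$, using a second-order smoothness estimate of the type $\big\|\poi{\lambda}*(\delta_1-\delta_0)*(\delta_1-\delta_0)\big\|_{TV}=O(1/\lambda)$; that second difference is the source of the $\min\{\sigmaone,\sigmatwo\}$ denominator. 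Your plan never invokes any second-difference bound, so it cannot produce that term; to repair it you would need to replace the intermediate $a_1+\poi{\lamb 2}$ by a mean-matched intermediate and add the second-order estimate to your toolkit.
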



\section{Testing PBD's} \label{sec:proof of main result}
We fill in the details of the outline provided in
Section~\ref{sec:approach}. Our algorithm is given in 
Algorithm~\ref{alg:test_pbd}.
\RestyleAlgo{boxruled}

\begin{algorithm}[ht]
\label{alg:test_pbd}
\caption{\texttt{Testing PBDs}}
\KwIn {Independent samples from unknown $P$ over $[\absz]$, $\dst$, $\err>0$}
\KwOut{With probability \costasnote{$\ge 0.75$}, output \ypbd, if $P\in\pbdn$, and \npbd, if
  $\ellone{P}{\pbdn}>\epsilon$. }
Using \costasnote{$\tilde{O}(1/\dst^2)$} samples, run the algorithm of
\cite{DaskalakisDS12} with accuracy $\dst/10$ to obtain $\ppbd \in \pbdn$\;
\eIf{
$\std(\ppbd)^2<{{\costasnote{C \cdot \tlog^4{1/\epsilon}}} \over \dst^8}$ 
}{
Find an interval $\cI$ of length $\costasnote{O}({\tlog^{2.5}(1/\dst)}/\dst^4)$ such that $\ppbd(\cI) \ge 1- \epsilon/5$\;
Run {\tt Simple Tolerant Identity Test} to compare $P$ and $\ppbd$ on $\cI$, using $\costasnote{k}=O\left(\frac{\costasnote{{\tlog^{2.5} (1/\dst)}}}{\dst^{6}}\right)$ samples\;
\eIf {\texttt{Simple Tolerant Identity test} outputs {\bf $close$}}
{
output \ypbd\;
}
{
output \npbd\;
}
}{
Use $O(n^{1/4}/\epsilon^2)$ samples to estimate the mean, $\hmlt$, and variance, $\hvar^2$, of $P$\;
Calculate an estimate $\hat{d}_{TV}(\tpest,\ppbd)$ of $\ellone{ \tpest}{\ppbd}$ that is accurate to within $\pm \dst/5$\;
\If{
$\hat{d}_{TV}(\tpest,\ppbd)>\dst/2$ OR $\hvar^2>n/2$
}{
output \npbd\;
}
Draw $K \sim \poi{k}$ samples, where $k \ge C_1\frac{\sqrt{\hvar \cdot \tlog(1/\dst)}}{\dst^2}$ and $C_1$ is as determined by Lemma~\ref{lem:largel2}\;
Let $\Nsmp_i$ be the number of samples that equal $i$\;
{
\eIf{$\frac1{\nsmp^2}\sum\Big[(\Nsmp_i-\nsmp \tpest(i))^2-\Nsmp_i\Big]<
  \costasnote{0.25 \cdot {{c \dst^2 }\over {\hvar\sqrt{\tlog(1/\dst)}}}}$, where $c$ is the constant from Claim~\ref{lem:l2far},
}{output \ypbd}{output \npbd}
}}
\end{algorithm}


 Our algorithm starts by running the algorithm of Theorem~\ref{thm:cds12} with accuracy $\dst/10$ to find $\ppbd\in\pbdabsz$. If the unknown 
distribution $\Pund \in \pbdn$, then with probability $\ge 0.99$,
$\ellone{\Pund}{\ppbd}\le\dst/10$. 

As in the outline, we next consider two cases, depending on the variance of $\ppbd$:\footnote{Notice that in defining our two cases we use the truncated logarithm  instead of the logarithm function in our threshold variance. This choice is made for trivial technical reasons. Namely, this logarithmic factor will appear in denominators later on, and it is useful to truncate it to avoid singularities.}
\begin{itemize}
\item {\bf Sparse Case:} when $Var(\ppbd)<\frac{\costasnote{C \cdot \tlog^4{1/\epsilon}}}{\dst^8}$. 
\item{\bf Heavy case:} when $Var(\ppbd)\ge\frac{\costasnote{C \cdot \tlog^4{1/\epsilon}}}{\dst^8}$. 
\end{itemize}

Clearly, if the distribution $\ppbd$ given by Theorem~\ref{thm:cds12} is supported on an interval of length $O(1/\dst^3)$, then we must be in the sparse case. Hence, the only way we can be in the heavy case is when $\ppbd$ is a Binomial distribution with variance larger than our threshold. We treat the two cases separately next. 



\subsection{Sparse case} \label{sec:sparse}

Our goal is to perform a simple tolerant identity test to decide whether $\dtv{P}{\ppbd} \le \epsilon/10$ or $\dtv{P}{\ppbd} > \epsilon$. We first develop the tolerant identity test. 

\paragraph{Simple Tolerant Identity Test:} The test is describe in
Algorithm~\ref{alg:tolerant_identity} and is based on the 
 folklore result described as Lemma~\ref{lem:empirical}.
\RestyleAlgo{boxruled}
\begin{algorithm}[ht]
\label{alg:tolerant_identity}
\caption{\texttt{ Simple Tolerant Identity Test}}
\KwIn {Known $\Ptrue$ over finite set $A$ of cardinality $|A|=m$, $\dst >0$, and independent samples $X_1,\upto X_{k}$
from unknown $\Pund$}
\KwOut{Close, if $\ellone{P}{Q}\le\epsilon/10$, and far, if $\ellone{P}{Q}>2\epsilon/5$}
$\hat{P}\gets$ empirical distribution of $X_1^{k}$\;
\eIf{$\ellone{\hat{P}}{Q}<2.5\dst/10$
}{output close\;}{output far\;}
\end{algorithm}

\begin{Lemma}
\label{lem:empirical}
Let $\epsilon>0$, and $P$ be an arbitrary distribution over a finite set $A$ of size $|A|=m$. With $O(m/\dst^2)$
independent samples from $P$, we can compute a distribution $Q$ over $A$ such that $\dtv{P}{Q} \le \epsilon$, with probability at least $.99$. In fact, the empirical distribution achieves this bound. 
\end{Lemma}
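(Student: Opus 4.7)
Take $Q := \hat P$ to be the empirical distribution formed from $k = C\,m/\epsilon^2$ i.i.d.\ samples $X_1,\ldots,X_k$ of $P$, for an absolute constant $C$ chosen at the end. The argument has two short steps: first bound $\EE[\ellone{P}{\hat P}]$, then concentrate $\ellone{P}{\hat P}$ around its mean.

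\textbf{Step 1 (expectation via Cauchy--Schwarz).} Letting $N_i = |\{j \le k : X_j = i\}|$, the count $N_i$ is $\Bin(k, P(i))$-distributed, so $\hat P(i) = N_i/k$ has variance $P(i)(1-P(i))/k$. By Jensen,
\[
\EE|\hat P(i) - P(i)| \;\le\; \sqrt{\Var(\hat P(i))} \;\le\; \sqrt{P(i)/k}.
\]
Summing over $i \in A$ and applying Cauchy--Schwarz with $\sum_i \sqrt{P(i)} \le \sqrt{m}$ yields
\[
\EE\bigl[\,2\,\ellone{P}{\hat P}\,\bigr] \;=\; \sum_{i \in A} \EE|\hat P(i) - P(i)| \;\le\; \sqrt{m/k}.
\]
For $k = Cm/\epsilon^2$ with $C$ a sufficiently large absolute constant, this already yields $\EE[\ellone{P}{\hat P}] \le \epsilon/2$.

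\textbf{Step 2 (concentration via McDiarmid).} View $f(X_1, \ldots, X_k) := \ellone{P}{\hat P}$ as a function of the $k$ independent samples. Replacing a single $X_j$ by some $X_j'$ changes $\hat P$ in at most two coordinates by $1/k$ each, hence $f$ changes by at most $1/k$; so $f$ satisfies the bounded-differences property with coefficient $1/k$. McDiarmid's inequality then gives
\[
\Pr\!\bigl[\,\ellone{P}{\hat P} \ge \EE[\ellone{P}{\hat P}] + t\,\bigr] \;\le\; \exp(-2kt^2).
\]
Setting $t = \epsilon/2$ and combining with Step~1 yields $\Pr[\,\ellone{P}{\hat P} > \epsilon\,] \le \exp(-Cm/2) \le 0.01$, provided $C$ is chosen large enough (any $C$ that also satisfies $\exp(-Cm/2)\le 0.01$ for $m \ge 1$ works).

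\textbf{Main obstacle.} There is no real obstacle; this is a folklore result. The only mild subtlety worth noting is that one must use Cauchy--Schwarz to collapse $\sum_i \sqrt{P(i)}$ to $\sqrt{m}$ rather than bounding each coordinate separately and union-bounding, which would cost an extra $\log m$ factor. Once the expectation is controlled, McDiarmid converts it into high probability at no additional sample cost.
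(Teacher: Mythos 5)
Your proof is correct. The paper itself gives no proof of this lemma---it is invoked as a ``folklore result''---and your argument (bounding $\EE[\ellone{P}{\hat P}]$ by $\tfrac12\sqrt{m/k}$ via Jensen and Cauchy--Schwarz, then concentrating with McDiarmid's inequality using the $1/k$ bounded-differences property) is the standard way to establish it; every step checks out, and the constants work as you claim. One could even skip Step~2 and finish with Markov's inequality at the cost of a larger constant $C$, but the McDiarmid route you chose gives the stated $0.99$ guarantee with a much smaller sample constant.
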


Lemma~\ref{lem:empirical} enables the simple tolerant identity tester,
whose pseudocode is given in Algorithm~\ref{alg:tolerant_identity}, which takes $O(m/\dst^2)$ samples from a distribution $P$ over $m$ elements and outputs whether it is $\le \dst/10$ close or $>2\dst/5$ far from 
a known distribution $Q$. The simple idea is that with sufficiently
many samples, the empirical distribution $\hat{P}$ satisfies
$\dtv{P}{\hat{P}} < \epsilon/10$ (by Lemma~\ref{lem:empirical}), which
allows us to distinguish between $\ellone{P}{Q}\le\epsilon/10$ and $\ellone{P}{Q}>2\epsilon/5$.

\paragraph{Finishing the Sparse Case:} Lemma~\ref{lem:chernoff_indicator} implies that there exists an interval $\cI$ of length $\costasnote{O({1 \over \epsilon^4}\cdot \tlog^{2.5} {1\over \epsilon})}$ such that $\ppbd(\cI) \ge 1- \epsilon/5$. Let us find such an interval $\cI$, and consider the distribution $P'$ that equals $P$ on $\cI$ and places all remaining probability on $-1$. Similarly, let us define $\ppbd'$ from $\ppbd$. It is easy to check that:
\begin{itemize}
\item if $P \in \pbdn$, then $\dtv{P'}{\ppbd'}\le\epsilon/10$, since $\dtv{P}{\ppbd}\le\epsilon/10$ and $P'$, $\ppbd'$ are coarsenings of $P$ and $\ppbd$ respectively.
\item if $\dtv{P}{\pbdn} > \epsilon$, then $\dtv{P'}{\ppbd'} > 2 \epsilon/5$. (This follows easily from the fact that $\ppbd$ places less than $\epsilon/5$ mass outside of $\cI$.)
\end{itemize}

Hence, we can use our simple tolerant identity tester (Algorithm~\ref{alg:tolerant_identity}) to distinguish between these cases from $O(|\cI|/\epsilon^2) = \costasnote{O({1 \over \epsilon^6} \cdot \tlog^{2.5} {1\over \epsilon})}$ samples.

\subsection{Heavy case}

In this case, it must be that $\ppbd={\rm Binomial}({\abszp},{p}$) and 
$\abszp p(1-p)\ge{{\costasnote{C \cdot \tlog^4{1 \over \epsilon}}} \over {\dst^{8}}}$. The high level plan for this case is the following:
\begin{enumerate}
\item First, using Theorem~\ref{thm:cds12}, we argue that, if $P \in \pbdn$, then its variance is also $\Omega\left({{\costasnote{\tlog^4{1 \over \epsilon}}} \over {\dst^{8}}}\right)$ large.
\item Next, we apply Lemma~\ref{lem:mean_var} with $O(n^{1/4}/\epsilon^2)$ samples to get estimates $\hat{\mu}$ and $\hat{\sigma}^2$ of the mean and variance of $P$. If $P \in \pbdn$, then these estimates are very accurate, with probability at least $0.99$, and, by Lemmas~\ref{lem:pbd_tp} and~\ref{lem:barbour}, the corresponding Translated Poisson distribution $\tpest$ is \costasnote{${\epsilon^2 \over C' \sqrt{\tlog {1 \over \epsilon}}}$}-close to $P$, for our choice of $C'$ (that we can tune by choosing $C$ large enough).
\item Then, with a little preprocessing, we can get to a state where we need to distinguish between the following, \costasnote{for any $C' \ge 10$ of our choice}:
\begin{enumerate}
\item
$\Pund\in\pbdabsz$ and $P$ is \costasnote{${\epsilon^2 \over C' \sqrt{\tlog {1 \over \epsilon}}}$}-\emph{close to} $\tpest$ OR \label{good case 1}
\item
$\dtv{P}{\pbdn} >\epsilon$ and $P$ is $3 \epsilon/10$-\emph{far from} $\tpest$. \label{good case 2}
\end{enumerate}
\item Finally, using the $\ell_\infty$ bound in Lemma~\ref{lem:pbd_tp}, we show that, if the first case holds, then $\Pund$ is close to $\tpund(\hat{\mlt},
\hat{\sigma}^2)$ even in the $\ell_2$ distance. Using this, we show that it suffices to design
an $\ell_2$ test against $\tpest$. The computations of this algorithm
are similar to the $\chi-$squared statistic used for testing closeness
of distributions in~\cite{AcharyaDJOPS12, ChanDVV13}.
\end{enumerate}

We proceed to flesh out these steps:

\paragraph{Step 1:} By Theorem~\ref{thm:cds12}, if $\Pund\in\pbdabsz$ and a Binomial$(n',p)$ is output by the algorithm of~\cite{DaskalakisDS12}
then $P$'s variance is within a factor of $4$ from $\abszp p(1-p)$, with probability at least $0.99$. Hence, if $\Pund\in\pbdabsz$ and we are in the heavy case, then we know that, with probability $\ge0.99$:
\begin{align}
Var(\Pund) >{{\costasnote{C \cdot \tlog^4{1 \over \epsilon}}} \over {4\dst^{8}}}. \label{eq:large variance}
\end{align}
Going forward, if $\Pund\in\pbdn$, we condition on~\eqref{eq:large variance}, which happens with good probability.

\paragraph{Step 2:} Let us denote by $\mu$ and $\sigma^2$ the mean and variance of the unknown $P$. If $P \in \pbdn$, then clearly $\sigma^2 \le n/4$. So let us use $\epsp=\frac{\dst}{(n/4)^{1/8}}$ in
Lemma~\ref{lem:mean_var} to compute estimates $\hat{\mu}$ and $\hat{\sigma}^2$ of $\mu$ and $\sigma^2$ respectively. Given that $\sigma>1$ for a choice of $C\ge 4$ in~\eqref{eq:large variance}, we get:
\begin{Claim}
\label{lem:mean_var_spcl}
If $\Pund\in\pbdn$ and $C\ge 4$, then the outputs $\hat{\mu}$ and $\hat{\sigma}^2$ of the algorithm of Lemma~\ref{lem:mean_var} computed from $O(n^{1/4}/\dst^2)$ samples from $P$ satisfy the following with probability $\ge 0.99$:
\begin{align}
|\mu-\hat{\mu}|<\frac{\dst}{\sigma^{1/4}}\sigma\ \ \ and \ \ \ |\sigma^2-\hat{\sigma}^2|<\costasnote{3}\frac{\dst}{\sigma^{1/4}}\sigma^2. \label{eq:very accurate estimates}
\end{align}
\end{Claim}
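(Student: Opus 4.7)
The plan is to apply Lemma~\ref{lem:mean_var} directly with the parameter $\epsp = \dst/(n/4)^{1/8}$ indicated in the paragraph preceding the claim, and then to simplify the guarantees using two facts: an upper bound on $\sigma$ coming from $\Pund\in\pbdn$ (which handles the mean bound) and a lower bound on $\sigma$ coming from the heavy-case assumption together with $C\ge 4$ (which handles the variance bound). No new analytic ideas are needed; the work is purely one of carefully plugging in.

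First, the sample complexity comes for free: Lemma~\ref{lem:mean_var} uses $O(1/\epsp^2) = O((n/4)^{1/4}/\dst^2) = O(n^{1/4}/\dst^2)$ samples, matching the claim. With probability at least $0.99$, the conclusions of Lemma~\ref{lem:mean_var} hold simultaneously, namely $|\mu-\meanp|<\epsp\sigma$ and $|\sigma^2-\varp^2|<\epsp\sigma^2\sqrt{4+1/\sigma^2}$.

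For the mean bound, I would use that $\Pund\in\pbdn$ forces $\sigma^2\le n/4$ (the maximum variance of a sum of $n$ Bernoullis), hence $\sigma^{1/4}\le(n/4)^{1/8}$, and therefore
\[
|\mu-\meanp|<\epsp\sigma=\frac{\dst\,\sigma}{(n/4)^{1/8}}\le\frac{\dst}{\sigma^{1/4}}\,\sigma,
\]
which is the desired estimate.

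For the variance bound, the key is to simplify $\sqrt{4+1/\sigma^2}$. The heavy-case hypothesis together with the conclusion $Var(\Pund)>\tfrac{C\,\tlog^4(1/\dst)}{4\dst^8}$ of Step~1 (with $C\ge 4$) yields $\sigma^2\ge \tlog^4(1/\dst)/\dst^8\ge 1$, since $\tlog\ge 1$ by definition and $\dst\le 1$. Consequently $1/\sigma^2\le 1$ and $\sqrt{4+1/\sigma^2}\le\sqrt{5}<3$, so
\[
|\sigma^2-\varp^2|<\epsp\sigma^2\sqrt{4+1/\sigma^2}\le 3\,\epsp\sigma^2=\frac{3\dst\,\sigma^2}{(n/4)^{1/8}}\le 3\,\frac{\dst}{\sigma^{1/4}}\,\sigma^2,
\]
again using $\sigma^{1/4}\le(n/4)^{1/8}$. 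The only subtlety worth flagging is the logical order: the lower bound $\sigma^2\ge 1$ used to discharge the $\sqrt{4+1/\sigma^2}$ factor must be taken from the heavy-case assumption rather than from the unknown $\Pund$, since at this point in the algorithm we only know that the learned $\ppbd$ has large variance. This is legitimate because Theorem~\ref{thm:cds12} guarantees (conditional on the $0.99$-probability event of Step~1) that the variance of $\Pund$ is within a factor of $4$ of that of $\ppbd$, which is exactly how~\eqref{eq:large variance} was derived. A union bound over the two $0.99$-probability events then gives the claim.
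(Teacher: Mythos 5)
Your proposal is correct and matches the paper's (implicit) argument exactly: the paper likewise applies Lemma~\ref{lem:mean_var} with $\epsp=\dst/(n/4)^{1/8}$, uses $\sigma^2\le n/4$ to replace $(n/4)^{1/8}$ by $\sigma^{1/4}$, and uses $\sigma>1$ (from~\eqref{eq:large variance} with $C\ge 4$) to bound $\sqrt{4+1/\sigma^2}<3$. Your remark about where the lower bound on $\sigma$ comes from is also consistent with how the paper conditions on~\eqref{eq:large variance} before stating the claim.
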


Using these bounds, we show next that $\tpest$ is a good approximator of $P$, if $P \in \pbdn$. 
%
\begin{Claim}
\label{lem:p_tp}
If $\Pund\in\pbdabsz$ and \eqref{eq:large variance},~\eqref{eq:very accurate estimates} hold, then \costasnote{for any constant $C'$} there exists large enough $C$:
\begin{align}
\ellone{\Pund}{\tpest}<\frac{\costasnote{3}}{\sigma}+\frac{\costasnote{14}\dst}{\sigma^{1/4}}\le \costasnote{\frac{\dst^2}{C' \sqrt{\tlog{1 \over \epsilon}}}} \label{eq:good bound 1}.
\end{align}
\end{Claim}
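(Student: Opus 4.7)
The plan is to bound $\ellone{\Pund}{\tpest}$ by a triangle inequality through the ``true-parameter'' Translated Poisson $\tpund(\mlt,\sigma^2)$:
\[
\ellone{\Pund}{\tpest}\;\le\;\ellone{\Pund}{\tpund(\mlt,\sigma^2)}\;+\;\ellone{\tpund(\mlt,\sigma^2)}{\tpund(\hmlt,\hvar^2)}.
\]
The first term is handled directly by the $\ell_1$ inequality in Lemma~\ref{lem:pbd_tp}. Using $\sum p_i^3(1-p_i)\le \sum p_i(1-p_i)=\sigma^2$, the numerator $2+\sqrt{\sum p_i^3(1-p_i)}$ is at most $2+\sigma$, so the first term is at most $(2+\sigma)/\sigma^2\le 3/\sigma$, provided $\sigma\ge 1$ (which holds by~\eqref{eq:large variance} once $C\ge 4$). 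This contributes the $3/\sigma$ summand in the target bound.

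For the second term I would apply Lemma~\ref{lem:barbour} to the two Translated Poisson distributions. The bound on $|\mu-\hmlt|$ comes directly from~\eqref{eq:very accurate estimates}, giving $|\mu-\hmlt|<\dst\sigma^{3/4}$. To bound $|\sigma-\hvar|$ I would use the factorization
\[
|\sigma-\hvar|\;=\;\frac{|\sigma^2-\hvar^2|}{\sigma+\hvar}\;\le\;\frac{3\dst\sigma^{7/4}\sqrt{4+1/\sigma^2}}{\sigma+\hvar},
\]
and observe that~\eqref{eq:very accurate estimates} (with $\sigma$ large by~\eqref{eq:large variance}) forces $\hvar\ge \sigma/2$, so $\min\{\sigma,\hvar\}\ge \sigma/2$ and $\sigma+\hvar\ge \sigma$. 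Plugging back into Lemma~\ref{lem:barbour}, each of the two summands
\[
\frac{|\mu-\hmlt|}{\min\{\sigma,\hvar\}},\qquad \frac{|\sigma-\hvar|+1}{\min\{\sigma,\hvar\}},
\]
becomes $O(\dst/\sigma^{1/4})+O(1/\sigma)$. Adding the contributions carefully and absorbing the $O(1/\sigma)$ term (which is much smaller than $\dst/\sigma^{1/4}$ since $\sigma$ is large) yields the middle expression $3/\sigma + 14\dst/\sigma^{1/4}$, where I keep the leading $3/\sigma$ from the Lemma~\ref{lem:pbd_tp} step separately to match the stated bound.

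For the final inequality, I would use~\eqref{eq:large variance}, which gives $\sigma\ge \tfrac12\sqrt{C}\,\tlog^2(1/\dst)/\dst^4$; hence $\sigma^{1/4}\ge c\,C^{1/8}\sqrt{\tlog(1/\dst)}/\dst$ and $1/\sigma$ is doubly-smaller still. Choosing $C$ large enough as a function of $C'$ makes both $3/\sigma$ and $14\dst/\sigma^{1/4}$ separately bounded by $\dst^2/(2C'\sqrt{\tlog(1/\dst)})$, and the claim follows.

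The routine but slightly fiddly step is the constant-chasing: converting $|\sigma^2-\hvar^2|$ to $|\sigma-\hvar|$ while keeping track that $\hvar$ is a constant-factor approximation of $\sigma$, and absorbing lower-order $1/\sigma$ terms into $\dst/\sigma^{1/4}$ cleanly enough to land on exactly the $3/\sigma + 14\dst/\sigma^{1/4}$ form. The conceptual content is entirely a triangle inequality plus the two standard Translated Poisson comparison lemmas; the only substantive use of the ``heavy case'' hypothesis~\eqref{eq:large variance} is to ensure that $\sigma$ is large enough for the estimation error $\hvar$ not to collapse and for all error terms to be dominated by $\dst^2/\sqrt{\tlog(1/\dst)}$.
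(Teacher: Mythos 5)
Your proposal is correct and follows essentially the same route as the paper: triangle inequality through $\tpunder$, the first bound of Lemma~\ref{lem:pbd_tp} for $\ellone{\Pund}{\tpunder}\le (2+\sigma)/\sigma^2\le 3/\sigma$, Lemma~\ref{lem:barbour} together with \eqref{eq:very accurate estimates} and $\hat{\sigma}>\sigma/2$ for $\ellone{\tpunder}{\tpest}=O(\dst/\sigma^{1/4})$, and \eqref{eq:large variance} to absorb everything into $\dst^2/(C'\sqrt{\tlog(1/\dst)})$ for large $C$. One small transcription point: the second summand of Lemma~\ref{lem:barbour} as stated is $\frac{|\sigma^2-\hat{\sigma}^2|+1}{\min\{\sigma^2,\hat{\sigma}^2\}}$ (variances, not standard deviations), so your factorization of $|\sigma-\hat{\sigma}|$ is unnecessary — applying the lemma as stated gives the same $O(\dst/\sigma^{1/4})$ order even more directly, and the conclusion is unaffected.
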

\begin{proof}
\costasnote{We first note  that for large enough $C$ we have $\sigma>256$ so~\eqref{eq:very accurate estimates} implies that ${\sqrt{7} \over 2}\sigma >\hat{\sigma}>\hat{\sigma}>{\sigma \over 2}$.} By the first bound of Lemma~\ref{lem:pbd_tp} we have that: 
\begin{align}
\label{eqn:tv_pbd}
\ellone{\Pund}{\tpunder}<\costasnote{\frac{2+\sigma}{\sigma^2}}<\frac{\costasnote{3}}{\sigma}.
\end{align}
Using~\eqref{eq:very accurate estimates} \costasnote{and $\hat{\sigma}>\sigma/2$}
in Lemma~\ref{lem:barbour} gives
\[
\ellone{\tpunder}{\tpest}<\costasnote{14}\frac{\dst}{\sigma^{1/4}}.
\]
So, from triangle inequality, $\ellone{\Pund}{\tpest}<\frac{\costasnote{3}}{\sigma}+\frac{\costasnote{14}\dst}{\sigma^{1/4}}$. Plugging~\eqref{eq:large variance} into this bound gives $\ellone{\Pund}{\tpest}\le \costasnote{\frac{\dst^2}{C' \sqrt{\tlog{1 \over \epsilon}}}}$, when $C$ is large enough. \end{proof}
Going forward, for any $C'$ of our choice (to be determined), we choose $C$ to be large enough as required by Claims~\ref{lem:mean_var_spcl} and~\ref{lem:p_tp}. In particular, this choice ensures $\sigma>256$, and ${\sqrt{7} \over 2}\sigma >\hat{\sigma}>{\sigma \over 2}$. Moreover, if $\Pund\in\pbdn$, we condition on~\eqref{eq:very accurate estimates} and~\eqref{eq:good bound 1}, which hold with good probability.

\paragraph{Step 3:} We do some pre-processing that allows us to reduce our problem to distinguishing between cases \ref{good case 1} and \ref{good case 2}. Given our work in Steps 1 and 2, if $\Pund\in\pbdabsz$, then with good probability, 
\begin{align}\ellone{\ppbd}{\tpest}\le \frac\dst{10}+\costasnote{\frac{\dst^2}{C' \sqrt{\tlog{1 \over \epsilon}}}}<\dst/5, \label{eq:useufl ineq 3}
\end{align}
for $C' \ge 10$. Given that $\tpest$ and $\ppbd$ are explicit distributions we can compute (without any samples) an estimate $\hat{d}_{TV}(\tpest,\ppbd)={d}_{TV}(\tpest,\ppbd) \pm \epsilon/5$. Based on this estimate we distinguish the following cases:
\begin{itemize}
\item If $\hat{d}_{TV}(\tpest,\ppbd) > \dst/2,$ we can safely deduce $\dtv{\Pund}{\pbdn} > \dst$. Indeed, if $\Pund \in \pbdn$, then by~\eqref{eq:useufl ineq 3} we would have  seen
\begin{align*}
&\hat{d}_{TV}(\tpest,\ppbd) \\\le& \dtv{\tpest}{\ppbd} + \dst/5 \le 2 \dst/5.
\end{align*}
\item If $\hat{\sigma}^2>n/2$, we can also safely deduce that $\dtv{\Pund}{\pbdn} > \dst$. Indeed, if $\Pund \in \pbdn$, then $\sigma^2 \le n/4$, hence $\hat{\sigma}^2 \le n/2$ by our assumption that ${\sqrt{7} \over 2}\sigma >\hat{\sigma}$.
\item So it remains to consider the case $\hat{d}_{TV}(\tpest,\ppbd) \le \dst/2.$ This implies
\begin{align}
&\ellone{\Pund}{\tpest}\nonumber\\
\ge &\ellone{\Pund}{\ppbd}-\ellone{\ppbd}{\tpest}\nonumber\\
\ge &\ellone{\Pund}{\ppbd}-\frac{7\dst}{10}. \label{eq: useful inequality 2}
\end{align}
Now, if $\dtv{P}{\pbdn} >\epsilon$, then \eqref{eq: useful inequality 2} implies $\ellone{\Pund}{\tpest} > {3 \epsilon \over 10}$. On the other hand, if $P \in \pbdn$, then~\eqref{eq:good bound 1} implies $\ellone{\Pund}{\tpest} \le \costasnote{\frac{\dst^2}{C' \sqrt{\tlog{1 \over \epsilon}}}}$, for any $C' \ge 10$ of our choice. So, it suffices to be able to distinguish between cases \ref{good case 1} and \ref{good case 2}.
\end{itemize}

%

\paragraph{Step 4:} We now show that an $\ell_2$ test suffices for distinguishing between Cases \ref{good case 1} and \ref{good case 2}. We start by bounding the $\ell_2$ distance between $P$ and $\tpest$ in the two cases of interest.  We start with the easy bound, corresponding to Case \ref{good case 2}.

\smallskip \noindent {\bf Case \ref{good case 2}:}
Using Cauchy-Schwarz Inequality, we can lower bound the $\ell_2$ distance
between $\Pund$ and $\tpest$ in this case.
\begin{Claim}
\label{lem:l2far}
$\ellone{\Pund}{\tpest}>3\dst/10$ implies:
\[
\elltwosq{\Pund}{\tpest}>\frac{c\dst^2}{\hat{\sigma}\sqrt{\tlog(1/\dst)}},
\]
for some absolute constant $c$.
\end{Claim}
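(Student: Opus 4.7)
\medskip

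\noindent \textbf{Proof plan for Claim~\ref{lem:l2far}.}
The plan is to combine a Cauchy--Schwarz inequality with the fact that $\tpest$ is a Translated Poisson of variance $\approx \hat\sigma^2$, hence is essentially supported on a short interval of length $O\bigl(\hat\sigma\sqrt{\tlog(1/\dst)}\bigr)$. On that short interval, a lower bound on the $\ell_1$ mass of $P-\tpest$ converts directly into a lower bound on the $\ell_2^2$ mass of the same difference, with exactly the right denominator.

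Concretely, I would first invoke Lemma~\ref{lem:tail_poi} to exhibit an interval $\cI$ of length $|\cI|=O(\hat\sigma\sqrt{\tlog(1/\dst)})$ such that $\tpest(\cI^c)\le \dst/100$. Since $\tpest$ is a Translated Poisson with mean $\hmlt$ and variance $\hat\sigma^2$, standard Poisson tail bounds (applied after the integer translation in Definition~\ref{def:translated poisson}) give such a $\cI$ centered at $\lfloor \hmlt\rfloor$ of the stated length. Writing $A \;=\; \sum_{i\in \cI}|\Pund(i)-\tpest(i)|$ and $B \;=\; \sum_{i\notin \cI}|\Pund(i)-\tpest(i)|$, the hypothesis gives $A+B=2\,\ellone{\Pund}{\tpest}>3\dst/5$.

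Next, I would show that $A=\Omega(\dst)$ irrespective of how $\Pund$ distributes its mass, by splitting on how large $\Pund(\cI^c)$ is. If $\Pund(\cI^c)\le \dst/5$, then $B\le \Pund(\cI^c)+\tpest(\cI^c)\le \dst/5+\dst/100$, so $A\ge 3\dst/5-\dst/5-\dst/100\ge \dst/3$. Otherwise $\Pund(\cI^c)>\dst/5$, and I would use the mass-conservation identity $\sum_{i\in \cI}\bigl(\Pund(i)-\tpest(i)\bigr) \;=\; \tpest(\cI^c)-\Pund(\cI^c)$ to conclude $A\ge |\tpest(\cI^c)-\Pund(\cI^c)|\ge \dst/5-\dst/100\ge \dst/6$. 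In either case $A\ge \dst/6$.

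Finally, applying the Cauchy--Schwarz inequality to the vector $\bigl(|\Pund(i)-\tpest(i)|\bigr)_{i\in \cI}$ against the all-ones vector on $\cI$ yields
\[
A^2 \;\le\; |\cI|\,\sum_{i\in \cI}\bigl(\Pund(i)-\tpest(i)\bigr)^2 \;\le\; |\cI|\cdot \elltwosq{\Pund}{\tpest},
\]
so $\elltwosq{\Pund}{\tpest}\ge A^2/|\cI| = \Omega\!\bigl(\dst^2/(\hat\sigma\sqrt{\tlog(1/\dst)})\bigr)$, which is the claim. The only mildly delicate step is the case analysis establishing $A=\Omega(\dst)$: one must ensure that whenever $B$ is forced to be large (because $\Pund$ puts too much mass outside $\cI$), the mass-conservation argument compensates, because $\tpest$ simply does not have mass outside $\cI$ to match. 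Everything else is the standard $\ell_1$-to-$\ell_2$ conversion, and the dependence on $\hat\sigma\sqrt{\tlog(1/\dst)}$ comes entirely from the effective support size of $\tpest$ supplied by Lemma~\ref{lem:tail_poi}.
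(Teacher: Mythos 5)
Your proof is correct and follows essentially the same route as the paper's: restrict to the $O(\hat{\sigma}\sqrt{\tlog(1/\dst)})$-length effective support of $\tpest$ supplied by Lemma~\ref{lem:tail_poi}, show the $\ell_1$ discrepancy restricted to that interval is still $\Omega(\dst)$, and convert to $\ell_2^2$ via Cauchy--Schwarz. Your case analysis on $\Pund(\cI^c)$ (using mass conservation when $\Pund$ puts much mass outside $\cI$) is in fact a more careful justification of the step the paper dispatches in one line.
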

\begin{proof}
By Lemma~\ref{lem:tail_poi},  $\tpest$ assigns
\costasnote{$\ge 1-\dst/10$} of its mass to an interval $\cI$ of length $\costasnote{O}(\hat{\sigma}\sqrt{\log
(1/\dst)})$. 
 Therefore, the $\ell_1$ distance between $\Pund$ and
$\tpest$ over $\cI$ is at least $\ellone{\Pund}{\tpest}-\costasnote{\dst/10} > 3\dst/10-\costasnote{\dst/10}>0.2\dst$. 
Applying the Cauchy-Schwarz Inequality, \emph{over this interval}: 
\[
\elltwosq{\Pund}{\costasnote{\tpest}}_{\cal I}\ge \frac{\ell_1^2(\Pund,\costasnote{\tpest})_{\cal I}}{|{\cal I}|}\ge
\frac{c\dst^2}{\hat{\sigma}\sqrt{\log (1/\dst)}}, 
\]
for some constant $c>0$. In the above inequality we denote by $\ell_1(\Pund,\tpest)_{\cal I}$ and $\ell_2(\Pund,\tpest)_{\cal I}$ the $\ell_1$ and $\ell_2$ norms respectively of the vectors obtained by listing the probabilities assigned by $P$ and $\tpest$ on all points in set ${\cal I}$.
\end{proof}

\smallskip \noindent {\bf Case \ref{good case 1}:}
Rollin's result stated in Section~\ref{sec:preliminaries} provides bounds on the 
$\ell_1$ and $\ell_\infty$ distance between a \Pbd\ and its
corresponding translated Poisson distribution. Using these bounds we can show:

\begin{Claim}
\label{lem:l2close}
In Case~\ref{good case 1}:
\[
\elltwosq{\Pund}{\tpest}\le \frac{\costasnote{5.3}}{\hat{\sigma}}\cdot \costasnote{\frac{2\dst^2}{C' \sqrt{\tlog{1 \over \epsilon}}}}.
\]
\end{Claim}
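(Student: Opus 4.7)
The plan is to bound $\elltwosq{\Pund}{\tpest}$ via the elementary inequality
\[
\elltwosq{\Pund}{\tpest} \;=\; \sum_i \paren{\Pund(i)-\tpest(i)}^2 \;\le\; \ell_\infty(\Pund,\tpest)\cdot \sum_i|\Pund(i)-\tpest(i)| \;=\; 2\,\ell_\infty(\Pund,\tpest)\cdot \ellone{\Pund}{\tpest},
\]
and then control each of the two factors separately. The $\ell_1$ factor is handed to us by the defining hypothesis of Case~\ref{good case 1}, giving $\ellone{\Pund}{\tpest}\le \costasnote{\frac{\dst^2}{C'\sqrt{\tlog(1/\dst)}}}$. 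Consequently, the only real work is to establish $\ell_\infty(\Pund,\tpest)\le \frac{5.3}{\hat{\sigma}}$; the two bounds then multiply to yield the claim.

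For the $\ell_\infty$ estimate I would use the trivial inequality $\ell_\infty(\Pund,\tpest)\le q_{\max}(\Pund)+q_{\max}(\tpest)$ and bound each maximum separately. For $q_{\max}(\Pund)$, I plan to substitute the $\ell_1$ bound~\eqref{eqn:tv_pbd} (namely $\ellone{\Pund}{\tpund(\mu,\sigma^2)}<3/\sigma$) into the $q_{\max}$ estimate of Lemma~\ref{lem:pbd_tp}, obtaining $q_{\max}(\Pund)\le \frac{3}{\sigma}+\frac{1}{2.3\sigma}$. For $q_{\max}(\tpest)$, since $\tpest$ is a shift of a Poisson distribution with parameter essentially $\hat{\sigma}^2$, the same mode estimate for Poissons that supplies the $1/(2.3\sigma)$ term in Lemma~\ref{lem:pbd_tp} yields $q_{\max}(\tpest)\le \frac{1}{2.3\hat{\sigma}}$. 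Finally, the standing inequality $\hat{\sigma}<\frac{\sqrt{7}}{2}\sigma$ from Step~2 (which holds once $C$ is taken large enough) converts the $1/\sigma$ factors into multiples of $1/\hat{\sigma}$; a short numerical addition then gives $\ell_\infty(\Pund,\tpest)\le \frac{5.3}{\hat{\sigma}}$.

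No serious obstacle arises: every ingredient is already assembled in Section~\ref{sec:preliminaries} and in the earlier steps of the heavy case, and the argument is essentially elementary. The only item requiring care is tracking the numerical constants tightly enough to land at exactly $5.3/\hat{\sigma}$; in particular, one must use the sharper bound $\hat{\sigma}>\frac{2}{\sqrt{7}}\sigma$ rather than the weaker $\hat{\sigma}>\sigma/2$ when rewriting $q_{\max}(\Pund)$ in terms of $\hat{\sigma}$. This requirement on $C$ is compatible with the constraints already imposed by Claims~\ref{lem:mean_var_spcl} and~\ref{lem:p_tp}, so the plan goes through without adjustment.
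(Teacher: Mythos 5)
Your proposal is correct and follows essentially the same route as the paper: bound $\elltwosq{\Pund}{\tpest}$ by $\ell_\infty(\Pund,\tpest)\cdot\ell_1(\Pund,\tpest)$, use the Case~\ref{good case 1} hypothesis for the $\ell_1$ factor, and control $\ell_\infty$ via the $q_{\max}$ bound of Lemma~\ref{lem:pbd_tp} together with the mode of the translated Poisson. The only (immaterial) deviation is that you bound $\ell_\infty(\Pund,\tpest)$ by the \emph{sum} $q_{\max}(\Pund)+q_{\max}(\tpest)$ where the paper uses the \emph{max}, which your slightly sharper Poisson-mode constant absorbs, so the arithmetic still lands under $5.3/\hat{\sigma}$.
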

\begin{proof}
\costasnote{Recall that, by our choice of $C$, $\sigma>256$, hence~\eqref{eq:very accurate estimates} implies that ${\sqrt{7} \over 2}\sigma >\hat{\sigma}>{\sigma \over 2}>128$.} Next, we recall the following bound on the $\ell_2$ norm of any two distributions $P$ and $Q$:
\begin{align}
\label{eqn:l2linl1}
\elltwosq{\Pund}{Q}\le \ell_{\infty}({\Pund},{Q})\ell_1({\Pund},{Q}).
\end{align}
Claim~\ref{lem:p_tp} takes care of the $\ell_1$ term when we 
substitute $Q=\tpest$. We now bound the $\ell_\infty$ 
term. 
For any distributions
$P$ and $Q$ it trivially holds that $\ell_{\infty}(P,Q)<\costasnote{\max\{\max_i\{P(i)\},\max_i\{Q(i)\}\}}$. By the third part of Lemma~\ref{lem:pbd_tp} and Equation~\eqref{eqn:tv_pbd}:
\[
\max_i \Pund(i)\le \ellone{\Pund}{\tpunder}+\frac1{2.3\sigma}\le \frac{\costasnote{4}}{\sigma} \costasnote{\le {5.3 \over \hat{\sigma}}}.
\]
To bound $\max_i Q(i)$, a standard Stirling approximation on the definition of
Translated Poisson distribution shows that the
largest probability of $\tpest$ is at most \costasnote{$1.5/\hat{\sigma}$}. Hence, $\ell_{\infty}(P,\tpest) \le \costasnote{{5.3 \over \hat{\sigma}}}$.
Plugging the above bounds into Equation~\eqref{eqn:l2linl1} with $Q=\tpest$ shows the result. 
\end{proof}

\costasnote{Claims~\ref{lem:l2close}
and~\ref{lem:l2far} show that the ratio of the squared $\ell_2$
distance between $\Pund$ and $\tpest$ in Case~\ref{good case 2} versus Case~\ref{good case 1} can be made larger than any constant, by choosing $C'$ large enough. To distinguish between the two cases we employ a tolerant $\ell_2$ identity test, based on an unbiased estimator $T_n$ of the squared $\ell_2$ distance between $\Pund$ and $\tpest$ described in Section~\ref{sec:ell2 test}. We will see that distinguishing between Cases~\ref{good case 1} and~\ref{good case 2} boils down to showing that in the latter case:
\[
Var(T_n)\ll\EE[T_n]^2.
\]}

\subsubsection{Unbiased $\ell_2^2$ Estimator.} \label{sec:ell2 test}
Throughout this section we assume that distributions are sampled in the Poisson sampling framework, where the number of samples $K$ drawn from a distribution are distributed as a Poisson random variable of some mean $k$ of our choice, instead of being a fixed number $k$ of our choice. This simplifies the variance computations by inducing independence among the number of times each symbol appears, as we discuss next. \costasnote{Due to the sharp concentration of the Poisson distribution, the number of samples we draw satisfies $K\le 2k$, with probability at least $1-({e \over 4})^{k}$.}

Suppose $\Nsmp\sim\poi{\nsmp}$ samples $\XoN$ are generated from a
distribution $\Pone$ over 
$[\absz]$. Let $\Nsmpi i$ be the random variable denoting the number
of appearances of symbol 
$i$. Then $\Nsmpi i$ is distributed according to $\poi{\lamb i}$, where  $\lamb
i\ed{\costasnote{k} \Pone(i)}$, independently of all other $\Nsmpi j$'s. Let also
$\lambp i \ed \costasnote{k} \Ptwo(i)$, and define:

\begin{align}
T_n=T_{\absz}(\XoN,\Ptwo)=\frac1{\nsmp^2}\sum_{i\in[\absz]}\Big[(\Nsmpi i-\lambp i)^2-\Nsmpi i\Big]. \label{eq:costas2}
\end{align}
A straightforward, albeit somewhat tedious, computation involving Poisson moments shows that 
\begin{Lemma}
\label{lem:expvar}
\[
\EE[T_\absz]
=\elltwosq{\Pone}{\Ptwo}=\frac1{k^2}\sum_{i=1}^{\absz}(\lamb i-\lambp i)^2
\]
and 
\costasnote{\[
\var(T_\absz) = \frac{2}{k^4}\sum_{i=1}^{\absz} \Big[\lamb i^2+2\lamb i(\lamb
i-\lambp i)^2\Big]. 
\]}
\end{Lemma}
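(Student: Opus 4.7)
The plan is to exploit the Poisson sampling regime, which renders the per-symbol counts $\Nsmpi 1, \ldots, \Nsmpi \absz$ mutually independent with $\Nsmpi i \sim \poi{\lamb i}$, thereby reducing both the mean and the variance of $T_\absz$ to a single-coordinate Poisson calculation.

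For the expectation, expand $(\Nsmpi i - \lambp i)^2 - \Nsmpi i = \Nsmpi i^2 - (2\lambp i + 1)\Nsmpi i + (\lambp i)^2$ and substitute the standard Poisson moments $\EE[\Nsmpi i] = \lamb i$ and $\EE[\Nsmpi i^2] = \lamb i + \lamb i^2$. One line of algebra shows that each summand has expectation exactly $(\lamb i - \lambp i)^2$, so by linearity $\EE[T_\absz] = \frac1{\nsmp^2}\sum_i (\lamb i - \lambp i)^2 = \elltwosq{\Pone}{\Ptwo}$, as desired.

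For the variance, independence across coordinates yields
\[
\var(T_\absz) = \frac1{\nsmp^4}\sum_{i=1}^\absz \var(Y_i), \qquad Y_i := (\Nsmpi i - \lambp i)^2 - \Nsmpi i,
\]
so it suffices to prove $\var(Y_i) = 2\lamb i^2 + 4\lamb i(\lamb i - \lambp i)^2$. Since $Y_i$ is quadratic in $\Nsmpi i$, the square $Y_i^2$ is a degree-four polynomial in $\Nsmpi i$, and I would compute $\EE[Y_i^2]$ using the first four Poisson moments. These are cleanest to derive from the factorial moments $\EE[\Nsmpi i^{\underline{r}}] = \lamb i^r$, giving $\EE[\Nsmpi i^3] = \lamb i + 3\lamb i^2 + \lamb i^3$ and $\EE[\Nsmpi i^4] = \lamb i + 7\lamb i^2 + 6\lamb i^3 + \lamb i^4$. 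Subtracting $\EE[Y_i]^2 = (\lamb i - \lambp i)^4$ and collecting by powers of $\lamb i$, the $\lamb i^0$ and $\lamb i^4$ contributions cancel along with several cross terms, leaving precisely $2\lamb i^2 + 4\lamb i(\lamb i - \lambp i)^2$ after factoring.

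The only real obstacle is bookkeeping: \emph{a priori} $\EE[Y_i^2] - \EE[Y_i]^2$ is a polynomial of total degree eight in $(\lamb i, \lambp i)$, and substantial cancellation must happen for only the two stated monomial groups to survive. A useful sanity check is the endpoint $\lambp i = 0$, where $Y_i = \Nsmpi i(\Nsmpi i - 1)$ is the second falling factorial and $\var(Y_i) = 2\lamb i^2 + 4\lamb i^3$ follows directly from factorial moments, matching the claimed expression. The other endpoint $\lambp i = \lamb i$ also matches, since then $\EE[Y_i] = 0$ and $\EE[Y_i^2]$ collapses to $2\lamb i^2$ via the Poisson central moments $\mu_3 = \lamb i$ and $\mu_4 = \lamb i + 3\lamb i^2$; both specializations agree with $2\lamb i^2 + 4\lamb i(\lamb i - \lambp i)^2$, confirming the formula.
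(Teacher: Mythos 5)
Your proposal is correct and matches the paper's (essentially omitted) argument: the paper simply asserts the lemma as "a straightforward, albeit somewhat tedious, computation involving Poisson moments," which is exactly the per-coordinate Poisson moment calculation you carry out, using independence of the counts under Poisson sampling. Your endpoint checks at $\lambp i = 0$ and $\lambp i = \lamb i$ both agree with the direct computation via central moments, which gives $\var(Y_i) = 4\lamb i(\lamb i-\lambp i)^2 + 2\lamb i^2$ as claimed.
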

We use Lemma~\ref{lem:expvar} with $P_1=\Pund$ and $P_2=\tpest$ to bound the variance of $T_n$ in terms of its squared expected value in Case~\ref{good case 2}, where $\ellone{\Pund}{\tpest}>0.3\dst$. 
\begin{Lemma}
\label{lem:largel2}
There is an absolute constant $C_1$ such that if 
\[
\nsmp\ge C_1 \frac{\sqrt{\hvar \cdot \tlog (1/\epsilon)}}{\epsilon^2},
\]
and $\ellone{P}{\tpest}>0.3\epsilon$, then
\[
\Var(T_n)<\frac1{20}\EE[T_n]^2.
\]
\end{Lemma}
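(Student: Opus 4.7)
The plan is to substitute the formulas from Lemma~\ref{lem:expvar} into the definition of $T_n$ with $P_1 = P$, $P_2 = \tpest$, $\lambda_i = k P(i)$, and $\lambda_i' = k \tpest(i)$. After simplification,
\[
\EE[T_n] = \elltwosq{P}{\tpest} \qquad \text{and} \qquad \Var(T_n) = \frac{2}{k^2}\sum_i P(i)^2 + \frac{4}{k}\sum_i P(i)(P(i)-\tpest(i))^2.
\]
Claim~\ref{lem:l2far} supplies the lower bound $\EE[T_n] \geq c\epsilon^2/(\hvar\sqrt{\tlog(1/\epsilon)})$, so the task reduces to showing that each of the two variance terms is at most $\tfrac{1}{40}(\elltwosq{P}{\tpest})^2$ once $k \geq C_1 \sqrt{\hvar\,\tlog(1/\epsilon)}/\epsilon^2$ for a large enough absolute constant $C_1$.

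The first variance term is straightforward. The triangle inequality in $\ell_2$ gives $\sum_i P(i)^2 \leq 2\sum_i \tpest(i)^2 + 2\,\elltwosq{P}{\tpest}$, and the Stirling-based bound $\max_i \tpest(i) \leq 1.5/\hvar$ from the proof of Claim~\ref{lem:l2close} yields $\sum_i \tpest(i)^2 \leq 1.5/\hvar$. Each of the resulting sub-terms $3/(\hvar k^2)$ and $4\,\elltwosq{P}{\tpest}/k^2$ can then be bounded by $\tfrac{1}{80}(\elltwosq{P}{\tpest})^2$ upon substituting the Claim~\ref{lem:l2far} lower bound and taking $C_1$ large; the binding constraint from this step is $k \geq \Omega(\sqrt{\hvar\,\tlog(1/\epsilon)}/\epsilon^2)$.

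The main obstacle is the second variance term $\tfrac{4}{k}\sum_i P(i)(P(i)-\tpest(i))^2$, because in Case~\ref{good case 2} we have no a priori control on $\max_i P(i)$. The key trick is to set $d(i) := P(i) - \tpest(i)$ and decompose
\[
\sum_i P(i)\, d(i)^2 \;=\; \sum_i \tpest(i)\, d(i)^2 + \sum_i d(i)^3 \;\leq\; \max_i \tpest(i)\cdot \elltwosq{P}{\tpest} + \sum_i |d(i)|^3.
\]
The first summand on the right is controlled by the Stirling bound as before; for the second, the monotonicity of $\ell_p$ norms on $\mathbb{R}^n$, namely $\|d\|_3 \leq \|d\|_2$, produces $\sum_i |d(i)|^3 = \|d\|_3^3 \leq \|d\|_2^3 = (\elltwosq{P}{\tpest})^{3/2}$, entirely sidestepping the need to bound $\max_i P(i)$. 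The resulting two sub-terms are $O(\elltwosq{P}{\tpest}/(\hvar k))$ and $O((\elltwosq{P}{\tpest})^{3/2}/k)$, and requiring each to be at most $\tfrac{1}{80}(\elltwosq{P}{\tpest})^2$ forces $k \geq \Omega(\sqrt{\tlog(1/\epsilon)}/\epsilon^2)$ and $k \geq \Omega(\sqrt{\hvar}\,\tlog^{1/4}(1/\epsilon)/\epsilon)$ respectively; both are implied by the hypothesis for a large enough $C_1$ (using $\epsilon \leq 1$ and $\tlog(1/\epsilon) \geq 1$), concluding the argument.
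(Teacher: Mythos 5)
Your proposal is correct, and it uses the same two external ingredients as the paper's proof (the Stirling bound $\max_i \tpest(i)\le 1.5/\hvar$ and the $\ell_2^2$ lower bound of Claim~\ref{lem:l2far}), but the algebraic route through the variance bound is genuinely different. For the cross term $\frac{4}{k^4}\sum_i\lambda_i(\lambda_i-\lambda_i')^2$, the paper applies Cauchy--Schwarz to get $\bigl[\sum_i\lambda_i^2\bigr]^{1/2}\bigl[\sum_i(\lambda_i'-\lambda_i)^4\bigr]^{1/2}\le\bigl[\sum_i\lambda_i^2\bigr]^{1/2}\sum_i(\lambda_i'-\lambda_i)^2$, which reduces both variance terms to the single condition $\sum_i\lambda_i^2 < \frac{1}{160^2}y^2$ with $y=\sum_i(\lambda_i'-\lambda_i)^2$; it then controls $\sum_i\lambda_i^2$ by expanding $y$ and using $\sum_i\lambda_i\lambda_i'\le \frac{1.5 k}{\hvar}\sum_i\lambda_i$, ending with a quadratic inequality in $y$. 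You instead write $P(i)=\tpest(i)+d(i)$ and split $\sum_i P(i)d(i)^2 = \sum_i\tpest(i)d(i)^2+\sum_i d(i)^3$, killing the second piece with the $\ell_p$ monotonicity $\|d\|_3\le\|d\|_2$; similarly you handle the first variance term with the $\ell_2$ triangle inequality rather than folding it into the same condition. Your version makes it more transparent which constraint on $k$ each sub-term imposes (and shows the binding one is $k\gtrsim\sqrt{\hvar\,\tlog(1/\dst)}/\dst^2$, from the $6/(\hvar k^2)$ piece), at the cost of tracking four sub-terms instead of one unified inequality; the paper's version is more compact. The only blemishes are cosmetic: your first sub-term should be $6/(\hvar k^2)$ rather than $3/(\hvar k^2)$ after carrying the factor $2\cdot 2\cdot 1.5$, and the step $k\gtrsim\sqrt{\tlog(1/\dst)}/\dst^2$ being implied by the hypothesis silently uses $\hvar\ge 1$, which does hold here since the heavy case guarantees $\hvar>128$. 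Neither affects correctness.
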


\begin{proof}

From Lemma~\ref{lem:expvar}:
\begin{align}
\var(T_\absz) = \frac{2}{k^4}\sum_{i=1}^{\absz} \lamb i^2+\frac{4}{k^4}\sum_{i=1}^{\absz} \lamb i(\lamb i-\lambp i)^2. \label{eq:costas1}
\end{align}
We will show that each term on the right hand side is less than $\EE[T_n]^2/40$:
\begin{itemize}
\item The second term can be bounded by using the following:
\begin{align*}
\sum_{i}\lamb i(\lambp i-\lamb i)^2
\stackrel{(a)}{\le} &\Big[\sum_{i}\lamb i^2\Big]^{\frac12} \Big[\sum_i (\lambp i-\lamb i)^4\Big]^{\frac12}\\
\stackrel{(b)}{\le} &\Big[\sum_{i}\lamb i^2\Big]^{\frac12} \Big[\sum_i (\lambp i-\lamb i)^2\Big],
\end{align*}
where $(a)$ uses the Cauchy-Schwarz inequality and $(b)$ follows 
from the fact that, for positive reals $a_1, \ldots a_n$, $(\sum_i a_i )^2\ge \sum
a_i^2$.
Therefore, to bound the second term of the right hand side of~\eqref{eq:costas1} by $\EE[T_n]^2/40$ it suffices to show that
\[
4\Big[\sum_{i}\lamb i^2\Big]^{\frac12} \Big[\sum_i (\lambp i-\lamb
i)^2\Big]<\frac1{40}{\Big[\sum_i (\lambp i-\lamb i)^2\Big]^2},
\]
which holds if 
\begin{align}
\label{eqn:firstsum}
\Big[\sum_{i}\lamb i^2\Big] <\frac1{160^2}{\Big[\sum_i (\lambp i-\lamb i)^2\Big]^2}.
\end{align}
\item To bound the first term of the right hand side of~\eqref{eq:costas1} by $\EE[T_n]^2/40$ it suffices to show that
\begin{align}
\label{eqn:secsum}
\Big[\sum_{i}\lamb i^2\Big] <\frac1{80}{\Big[\sum_i (\lambp i-\lamb i)^2\Big]^2}.
\end{align}
\end{itemize}
Note that~\eqref{eqn:firstsum} is stronger
than~\eqref{eqn:secsum}. Therefore, we only
prove~\eqref{eqn:firstsum}.
Recall, from the proof of Claim~\ref{lem:l2close}, that $\max_i \tpest (i)\le \frac{\costasnote{1.5}}{\hvar}$. Using $\lambp i = \nsmp \tpest(i)$ and $\sum \lamb i=\nsmp$, 
\begin{align*}
\sum_{i}(\lambp i-\lamb i)^2>& \sum_i\lamb i^2-2 \sum_i\lamb i\lambp i  \\
\ge &
\sum_i\lamb i^2-\frac{\costasnote{3}\nsmp}{\hvar} \sum_i\lamb i\\
=&\sum_i\lamb i^2-\frac{\costasnote{3}\nsmp^2}{\hvar}, 
\end{align*}
and hence 
\[
\sum_{i}(\lambp i-\lamb i)^2+\frac{\costasnote{3}\nsmp^2}{\hvar}> \sum_i\lamb i^2.
\]
Let $y\ed \sum_{i}(\lambp i-\lamb i)^2$. 
It suffices to show that
\[
\frac1{160^2}y^2> y+\frac{\costasnote{3}\nsmp^2}{\hvar}, 
\]
which holds if the following conditions are satisfied: $y>2\cdot160^2$ and $y^2>\costasnote{6}\cdot 160^2
\frac{k^2}{\hvar}$.
By Claim~\ref{lem:l2far},  
\begin{align*}
y=\sum_{i}(\lambp i-\lamb i)^2>\frac{\nsmp^2c\dst^2}{\hvar\sqrt{\tlog(1/\dst)}},
\end{align*}
so the conditions hold as long as: 
\[
k \ge C_1\frac{\sqrt{\hvar \cdot \tlog(1/\dst)}}{\dst^2},
\]
and $C_1$ is a large enough constant.
\end{proof}

\subsubsection{Finishing the Heavy Case.}

Recall that the ratio of the squared $\ell_2$ distance between $\Pund$ and $\tpest$ in Case~\ref{good case 2} versus Case~\ref{good case 1} can be made larger than any constant, by choosing $C'$ large enough. This follows from  Claims~\ref{lem:l2far} and~\ref{lem:l2close}. Let us choose $C'$ so that this ratio is $ >100$. Now let us draw $K \sim \poi{k}$ samples from the unknown distribution~$\Pund$, where $k \ge C_1\frac{\sqrt{\hvar \cdot \tlog(1/\dst)}}{\dst^2}$ and $C_1$ is determined by Lemma~\ref{lem:largel2}, and compute $T_n$ using~\eqref{eq:costas2} with $\Pone=\Pund$ and $\Ptwo=\tpest$. 

By Lemma~\ref{lem:expvar}, $\EE[T_\absz]=\elltwosq{\Pund}{\tpest}$. Moreover:
\begin{itemize}
\item In Case~\ref{good case 1}, by Markov's Inequality, $T_n$ does not exceed $10$ times its expected value with probability at least $0.9$. 

\item In Case~\ref{good case 2}, from Chebychev's Inequality and Lemma~\ref{lem:largel2} it follows that
\[
\prob{|T_n-\EE[T_n]|>\frac{\EE[T_n]}{\costasnote{\sqrt{2}}}}<\frac1{10}.
\]
\end{itemize}
It follows that we can distinguish between the two cases with probability at least \costasnote{$0.9$} by appropriately thresholding $T_n$. One possible value for the threshold is one quarter of the bound of Claim~\ref{lem:l2far}. This is the threshold used in Algorithm~\ref{alg:test_pbd}. This concludes the proof of correctness of the heavy case algorithm.

\subsection{Correctness and Sample Complexity of Overall Algorithm.}
We have argued the correctness of our algorithm conditioning on various events. The overall probability of correctness is at least \costasnote{$0.99^2 \cdot 0.9 \ge 0.75$}. Indeed, one $0.99$ factor accounts for the success of Theorem~\ref{thm:cds12}, if $P \in \pbdn$. If the algorithm continues in the sparse case, the second $0.99$ factor accounts for the success of the {\sc Simple Tolerant Identity Test}, and we don't need to pay the factor of $0.9$. If the algorithm continues in the heavy case, the second $0.99$ factor accounts for the success of Lemma~\ref{lem:mean_var}, if $P \in \pbdn$, and the $0.9$ factor accounts for the success of the $\ell_2$ test. (In this analysis, we have assumed that we use fresh samples, each time our algorithm needs samples from $\Pund$.) Clearly, running the algorithm $O(\log(1/\delta))$ times and outputting the majority of answers drives the probability of error down to any desired $\delta$, at a cost of a factor of $O(\log(1/\delta))$ in the overall sample complexity.

Let us now bound the sample complexity of our algorithm. It is easy to see that the expected number of samples is as desired, namely:  $O\left(\frac{\absz^{1/4} \sqrt{\tlog (1/\dst)}}{\dst^2}+\frac{\costasnote{\tlog^{2.5} (1/\dst)}}{\dst^6}\right)$ times a factor of $O(\log 1/\delta)$ from repeating $O(\log 1/\delta)$ times. (It is easy to convert this to a worst-case bound on the sample complexity, by adding an extra check to our algorithm that aborts computation whenever $K \ge \Omega(k)$.)

\section{Lower Bound}
\label{app:lower_bound}
We now show that any algorithm for {\sc PbdTesting} requires
$\Omega(\absz^{1/4}/\dst^2)$ samples, in the spirit
of~\cite{Paninski04, AcharyaDJOPS12}.

Our lower bound will be based on constructing two classes of
distributions $\cPp$ and $\cQp$ such that 
\begin{itemize}
\item[(a)]
$\cPp$ consists of the single distribution $P_0\ed\Binomial{n}{1/2}$.
\item[(b)]
a uniformly chosen $Q$ from $\cQp$ satisfies
$\ellone{Q}{\pbdabsz}>\dst$ with probability $>0.99$. 
\item [(c)]
any algorithm that succeeds in distinguishing $P_0$ from a uniformly chosen
distribution from $\cQp$ with probability
$>0.6$ requires $\ge c\cdot\absz^{1/4}/\dst^2$ samples, for an
absolute constant $c>0$.
\end{itemize}

Suppose $k_{min}$ is the least number of samples required for {\sc PbdTesting} with
success probability $>2/3$. We show that if the conditions above are
satisfied then 
\[
k_{min}\ge c\cdot\absz^{1/4}/\dst^2.
\]
The argument is straight-forward as we can use the {\sc PbdTesting} algorithm with $k_{min}$ samples to distinguish
$P_0$ from a uniformly chosen $Q \in \cQp$, by just checking whether $Q \in \pbdn$ or $\ellone{Q}{\pbdn}>\epsilon$. The success probability of the algorithm is at least $2/3 \cdot 0.99 > 0.6$. Indeed, by (b) a uniformly chosen distribution from
$\cQp$ is at least $\dst$ away from $\pbdabsz$ with probability
$>0.99$,  and the {\sc PbdTesting} algorithm succeeds with probability $>2/3$ on those distributions. Along with $(c)$ this proves the lower bound on $k_{min}$.

We now construct $\cQp$. Since $P_0$ is $Binomial(n,1/2)$, 
\[
P_0(i)= {\absz \choose i}\left(\frac12\right)^\absz,
\]
and $P_0(i)=P_0(\absz-i)$. 

Without loss of generality assume $n$ is even. For each of the
$2^{n/2}$ vectors 
\costasnote{$z_0z_1\ldots z_{n/2-1}\in\{-1,1\}^{n/2}$}, define a distribution $Q$
over $\{0,1,\ldots,n\}$ as follows, where $c$ is an absolute constant
specified later.  
\[
Q(i)=\begin{cases}
(1-c\dst z_i)P_0(i)& if\ \ i<n/2,\\
\costasnote{P_0(n/2)}& if\ \ i=n/2,\\
(1+c\dst z_{n-i})P_0(i)& otherwise.
\end{cases}
\]
 The class $\cQp$ is the collection of these $2^{n/2}$ distributions. We proceed to prove $(b)$ and $(c)$. 

{\bf Proof of Item (b):} We need to prove that a uniformly picked distribution in
$\cQp$ is $\dst-$far from $\pbdabsz$ with probability $>0.99$. Since
Poisson Binomials are log-concave, and hence unimodal, it will suffice
to show that in fact distributions in $\cQp$ are $\dst-$far from all
unimodal distributions. The intuition for this is that when a
distribution is picked at random from $\cQp$ it is equally likely to
be above $P_0$ or under $P_0$ \costasnote{at any point $i$ of its support}. Since $P_0$ is a well behaved function,
namely it varies smoothly around its mean, we expect then that typical distributions
in $\cQp$ with have a {lot of} modes. 

We say that a distribution $Q\in\cQp$ has a mode at $i$ if 
\[
Q(i-1)<Q(i)>Q(i+1) \text{ or }Q(i-1)>Q(i)<Q(i+1).
\]

We consider $i\in\cI_l\ed[n/2-4\sqrt n, n/2+4\sqrt n]$. Then by 
Lemma~\ref{lem:chernoff_indicator} for $P_0$, 
\[
P_0(\cI_l)\ge 1-2e^{-8}>0.99.
\]

Note that for $i\in\cI_l$, 
\begin{align}
\label{eqn:ratio}
\frac{P_0(i+1)}{P_0(i)} =
\frac{\absz-i}{i+1}\in\left[1-\frac{18}{\sqrt n}, 1+\frac{18}{\sqrt
    n}\right]. 
\end{align}

We need the following simple result to show that most distributions
in $\cQp$ are $\dst-$far from all unimodal distributions. 
\begin{Claim}
\label{clm:updown}
Suppose $a_1\ge a_2$ and $b_1\le b_2$, then 
\[
|a_1-b_1|+|a_2-b_2|\ge |a_1-a_2|
\]
\end{Claim}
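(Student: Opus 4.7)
\textbf{Proof plan for Claim~\ref{clm:updown}.} The claim is a one-dimensional ``crossing'' inequality: the total cost of sending $a_1 \mapsto b_1$ and $a_2 \mapsto b_2$ must absorb the gap $a_1 - a_2$ whenever the target pair is ordered oppositely to the source pair. My plan is to deduce it from a single application of the triangle inequality after a sign flip, and then observe that the hypotheses make the resulting absolute value drop off for free.

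The key step is to pair up the two absolute values with the correct signs. Concretely, I would write
\[
|a_1 - b_1| + |a_2 - b_2| \;=\; |a_1 - b_1| + |b_2 - a_2| \;\ge\; \bigl| (a_1 - b_1) + (b_2 - a_2) \bigr| \;=\; \bigl| (a_1 - a_2) + (b_2 - b_1) \bigr|,
\]
where the inequality is the ordinary triangle inequality $|x| + |y| \ge |x + y|$. Now the two hypotheses $a_1 \ge a_2$ and $b_1 \le b_2$ say precisely that both summands $a_1 - a_2$ and $b_2 - b_1$ inside the absolute value are nonnegative, so the absolute value disappears and the right-hand side equals $(a_1 - a_2) + (b_2 - b_1) \ge a_1 - a_2 = |a_1 - a_2|$, which is exactly the desired bound.

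There is essentially no obstacle here: the only thing to get right is the bookkeeping of signs so that the triangle inequality lines up with a nonnegative quantity after using the monotonicity hypotheses. As a sanity check for the application that follows (showing that typical $Q \in \cQp$ have many modes), I would note that the bound is tight when $b_1 = b_2$ lies in $[a_2, a_1]$, which matches the intuition that the ``cost'' of a mismatch between an increasing and a decreasing pair is at least the vertical gap on one side. With this lemma in hand, the subsequent argument can sum the inequality over triples where $P_0$ is monotone but a random $Q \in \cQp$ oscillates, thereby lower-bounding $\ellone{Q}{U}$ for any unimodal $U$.
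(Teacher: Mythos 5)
Your proof is correct and is essentially identical to the paper's: both apply the triangle inequality to $(a_1-b_1)+(b_2-a_2)$ and then use the hypotheses $a_1\ge a_2$, $b_1\le b_2$ to drop the absolute value and discard the nonnegative term $b_2-b_1$. Nothing further is needed.
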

\begin{proof}
By the triangle inequality, 
\[
|a_1-b_1|+|b_2-a_2|\ge |a_1-a_2+b_2-b_1|\ge a_1-a_2. \notag
\]
Using $a_1\ge a_2$ proves the result. 
\end{proof}

Consider any unimodal distribution $R$ over $[n]$. Suppose its unique mode
is at $j$. Suppose $j \costasnote{\ge} n/2$ \costasnote{(the other possibility is treated symmetrically)} and $R$ is increasing until $j$.  Then for
$Q\in\cQp$, let $\costasnote{\cI_l \ni i}<j$ be such that $Q(i)=P_0(i)\cdot(1+c\cdot\dst)$ and
$Q(i+1)=P_0(i+1)\cdot(1-c\cdot\dst)$. If $c>200$ and
$\dst>100/\sqrt n$, then by~\eqref{eqn:ratio}, $Q(i+1)<Q(i)$ \costasnote{(for large enough $n$)}, and
therefore
\begin{align*}
&|Q(i+1)-R(i+1)|+|Q(i)-R(i)|\\\ge&Q(i)-Q(i+1)\\
=&P_0(i)\cdot(1+c\cdot\dst)-P_0(i+1)\cdot(1-c\cdot\dst)\\
\ge& P_0(i)\cdot c\dst.
\end{align*}
This can be used to lower bound the $\ell_1$ distance from \costasnote{a typical distribution $Q \in \cQp$ to any
unimodal distribution}. Simple Chernoff bounds show that a
randomly chosen string of length $\Theta(\sqrt n)$ over $\{+1, -1\}$
has $\Theta(\sqrt n)$ occurrences of +1-1 and -1+1 in consecutive
locations with high probability. Moreover, note that in the interval
$\cI_l$, $P_0(i)=\Theta(1/\sqrt \absz)$.  Using this along with the bound above
shows that taking $c$ large enough proves that a random distribution
in $\cQp$ is $\dst-$far from all unimodal distributions with high probability.

{\bf Proof of Item (c):}
We consider the distribution obtained
by picking a distribution uniformly from $\cQp$ and generating 
$K=\poi \nsmp$ samples from it. (By the concentration of
the Poisson distribution, it suffices to prove a lower bound w.r.t. the mean $k$ of the Poisson.) Let
$\bar Q^{k}$ denote the distribution over $\poi \nsmp$ length 
samples thus generated. Since a distribution is chosen
at random, the $z_i$'s are independent of each other. Therefore, 
$K_i$, the number of occurrences of symbol $i$ is independent of all
$K_j$'s except $j=\absz-i$. 
Using this we get the following decomposition
\begin{align*}
&\bar Q^k(\costasnote{K_0=k_0}, \ldots, K_n=k_\absz) \\
=&\prod_{\costasnote{i=0}}^{\absz/2}\bar{Q}^k(K_i=k_i, K_{n-i}=k_{\absz-i}).
\end{align*}
Now $K_i$ and $K_{\absz-i}$ are generated either by \costasnote{$\poi{\lambda_i^-}$, where $\lambda_i^-\ed {\nsmp
  (1-c\dst)P_0(i)}$, or by  $\poi{\lambda_i^+}$, where $\lambda_i^+\ed {\nsmp (1+c\dst)P_0(i)}$ with equal
probability}. Therefore:
\begin{align}
\label{eqn:q}
&\bar{Q}^k(K_i=k_i, K_{n-i}=k_{\absz-i})\nonumber\\
=&\frac12[\poid{\lambda_i^+}{k_i}\poid{\lambda_i^-}{k_{\absz-i}}+\poid{\lambda_i^-}{k_i}\poid{\lambda_i^+}{k_{\absz-i}}]\nonumber\\
=& \frac12\frac{e^{-2\nsmp P_0(i)}}{k_i!k_{\absz-i}!}(\nsmp
P_0(i))^{k_i+k_{\absz-i}}\cdot\\
&\left[(1+c\dst)^{k_i}(1-c\dst)^{k_{\absz-i}}+(1-c\dst)^{k_i}(1+c\dst)^{k_{\absz-i}})\right]\nonumber.
\end{align}
Let $P_0^k$ denote distribution over $\poi{k}$ samples from 
the Binomial $P_0$. 
By independence of multiplicities, 
\begin{align*}
&P_0^k (K_1=k_1, \ldots, K_n=k_\absz)\\
 =& \prod_{i=1}^{\absz} \poid{\nsmp P_0(i)}{k_i}\\
=&\frac{e^{-\nsmp P_0(i)}}{k_i!}(\nsmp P_0(i))^{k_i}.
\end{align*}
Our objective is to bound $\ellone{P_0^k}{\bar Q^k}$. We use the  
following. 
\begin{Lemma}[\cite{DevroyeL01}]
For any distributions $P$ and $Q$
\[
2\ellone{P}{Q}^2\le \log \EE_Q\left[\frac{Q}{P}\right].
\]
\end{Lemma}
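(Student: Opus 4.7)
The plan is to deduce this inequality from two classical results: Pinsker's inequality and Jensen's inequality. Concretely, I would show the chain
\[
2\ellone{P}{Q}^2 \;\le\; D(Q\|P) \;=\; \EE_Q\!\left[\log\frac{Q}{P}\right] \;\le\; \log \EE_Q\!\left[\frac{Q}{P}\right],
\]
where $D(Q\|P) \ed \sum_x Q(x)\log\frac{Q(x)}{P(x)}$ is the Kullback--Leibler divergence.

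The first inequality is precisely Pinsker's inequality in the form $2\,d_{TV}(P,Q)^2 \le D(Q\|P)$. I would quote this as standard, but if one wants to be self-contained, the usual reduction is to the two-point case: partition the sample space into $A=\{x : Q(x)\ge P(x)\}$ and its complement, write $d_{TV}(P,Q) = Q(A)-P(A)$, invoke the data-processing inequality for KL to reduce to the Bernoulli comparison between $(Q(A),1-Q(A))$ and $(P(A),1-P(A))$, and then verify the scalar inequality $p\log(p/q)+(1-p)\log((1-p)/(1-q)) \ge 2(p-q)^2$ by computing the second derivative in $q$.

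The second inequality is immediate from Jensen's inequality: because $\log$ is concave and $Q/P$ is a non-negative random variable under $Q$,
\[
\EE_Q\!\left[\log\frac{Q}{P}\right] \;\le\; \log\EE_Q\!\left[\frac{Q}{P}\right].
\]
Chaining the two bounds yields exactly $2\ellone{P}{Q}^2 \le \log \EE_Q[Q/P]$, as claimed.

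The only delicate point is Pinsker's inequality; Jensen's step is a one-liner. In the context of the lower-bound argument, the reason this particular form is useful is that the right-hand side is precisely the log of the chi-squared-type quantity $\EE_Q[Q/P]$, which tensorizes nicely over independent coordinates and can be computed coordinate-by-coordinate for the Poisson-sampled product structure built in the previous display for $\bar Q^k$ and $P_0^k$. So the real work in the paper comes after this lemma is invoked: bounding $\EE_{P_0^k}[\bar Q^k / P_0^k]$ (or its $Q$-expectation variant, after symmetrizing) using the explicit Poisson form in~\eqref{eqn:q} and showing it is $1 + O(1)$ when $k \ll n^{1/4}/\dst^2$. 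But the statement of the lemma itself follows purely from Pinsker plus Jensen.
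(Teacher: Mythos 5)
Your proof is correct and follows exactly the paper's own argument: Pinsker's inequality gives $2\ellone{P}{Q}^2 \le KL(Q,P) = \EE_Q[\log(Q/P)]$, and concavity of the logarithm (Jensen) gives $\EE_Q[\log(Q/P)] \le \log\EE_Q[Q/P]$. The extra detail you supply on proving Pinsker from the two-point reduction is fine but not needed, since the paper simply cites it.
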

\begin{proof}
By Pinsker's Inequality~\cite{CoverT06}, and concavity of logarithms, 
\[
2\ellone{P}{Q}^2\le KL(Q,P) = \EE_Q\left[\log \frac{Q}{P}\right]\le \log
\left[\EE_Q\frac{Q}{P}\right].
\]
\end{proof}

We consider the ratio of $\bar Q^k$ to $P_0^k$, and obtain
\begin{align*}
&\frac{\bar Q^k(K_0=k_0, \ldots, K_n=k_\absz)}{P_0^k(K_0=k_0, \ldots,
  K_n=k_\absz)}\\
=&\prod_{\costasnote{i=0}}^{\costasnote{{\absz \over 2}-1}}\frac{\bar{Q}^k(K_i=k_i,
  K_{n-i}=k_{\absz-i})}{P_0^k(K_i=k_i)P_0^k(K_{n-i}=k_{\absz-i})}\\
= &\costasnote{\prod_{{i=0}}^{{{\absz \over 2}-1}}}{(1+c\dst)^{k_i}(1-c\dst)^{k_{\absz-i}}+(1-c\dst)^{k_i}(1+c\dst)^{k_{\absz-i}} \over 2}
\end{align*}
where we used~\eqref{eqn:q}. We can use this now to calculate the following expectation
\begin{align*}
&\EE_{\bar Q^k}\left[\frac{\bar Q^k}{P_0^k}\right]\\
=& 
\costasnote{\prod_{i=0}^{\absz/2-1}}\left[\sum_{k_i\ge0, k_{\absz-i}\ge0}\bar{Q}^k(K_i=k_i,
  K_{n-i}=k_{\absz-i})\cdot\right.\\ &~~~\left.\frac 12\left((1 \!+\! c\dst)^{k_i}(1 \!-\! c\dst)^{k_{\absz-i}}\!\!+\! (1 \!-\! c\dst)^{k_i}(1 \!+\! c\dst)^{k_{\absz-i}}\right)\right].
\end{align*}
For $X\sim \poi \lambda$, elementary calculus shows that 
\[
\EE[a^X]=e^{\lambda(a-1)}.
\]
Combining with~\eqref{eqn:q}, and using
$P_0(i)=P_0(\absz-i)$, the above expression simplifies to 
\begin{align*}
\EE_{{\bar Q}^k}&\left[\frac{{\bar Q}^k}{P_0^k}\right]\\
~~=&\costasnote{\prod_{i=0}^{\absz/2-1}} \frac12\left[e^{c\dst \nsmp (1+c\dst) P_0(i)}e^{-c\dst \nsmp
    (1-c\dst)P_0(i)}\right.\\&\qquad \quad \left.+e^{-c\dst \nsmp (1+c\dst) P_0(i)}e^{c\dst
    \nsmp (1-c\dst) P_0(i)}\right]\\
~~=& \costasnote{\prod_{i=0}^{\absz/2-1}}\frac12\left[e^{2c^2\dst^2\nsmp P_0(i)}+e^{-2c^2\dst^2\nsmp
    P_0(i)}\right]\\
\le& e^{2c^4\dst^4 \nsmp^2\sum_{i=0}^{\costasnote{\absz/2-1}}P_0(i)^2},
\end{align*}
where the last step uses, $e^x+e^{-x}\le 2e^{x^2/2}$.
Using Stirling's approximation, we get:
\begin{align*}
\sum_{i=0}^{\costasnote{\absz/2-1}}P_0(i)^2 \le\max_{i} P_0(i)\le P_0\left(\frac\absz2\right)={n
  \choose \frac \absz2}\frac1{2^n}\le \frac
1{\sqrt{n}}.
\end{align*}
Therefore, 
\begin{align*}
\ellone{P_0^k}{\bar Q^k}^{\costasnote{2}}\le \frac{c^4\dst^4 \nsmp^2}{\sqrt{\absz}}.
\end{align*}

Unless $\nsmp=\Omega(\absz^{1/4}/\dst^2)$, there is no test to 
distinguish a distribution picked uniformly from $\cQp$
versus $P_0$. This proves item $(c)$.

\section*{Acknowledgements}
The authors thank Piotr Indyk, Gautam Kamath,  
and Ronitt Rubinfeld for helpful discussions and thoughts. Jayadev
Acharya thanks Ananda Theertha Suresh for valuable suggestions. 

{\footnotesize{\bibliographystyle{alpha}
\bibliography{masterref}}}
\newpage
\end{document}